\documentclass[12pt]{JHEP3}

\usepackage{amsmath}
\usepackage{amssymb}
\usepackage{amsthm}

\newcommand{\RR}{\mathbb{R}} 
\newcommand{\QQ}{\mathbb{Q}} 
\newcommand{\FF}{\mathbb{F}} 
\newcommand{\ZZ}{\mathbb{Z}}

\newcommand{\MM}{\mathbb{M}}
\newcommand{\Hh}{\mathcal{H}}

\newcommand{\M}{\mathcal{M}}
\newcommand{\N}{\mathcal{N}}

\newcommand{\calS}{\mathcal{S}}
\newcommand{\TT}{\mathbb{T}}
\newcommand{\T}{\mathcal{T}}
\newcommand{\R}{\mathcal{R}}
\newcommand{\I}{\mathcal{I}}
\newcommand{\C}{\mathcal{C}}
\newcommand{\Q}{\mathcal{Q}}

\newcommand{\rrangle}{\rangle\!\!\!\;\rangle}
\newcommand{\llangle}{\langle\!\!\!\;\langle}

\newtheorem{theorem}{Theorem}
\newtheorem{proposition}[theorem]{Proposition}

\DeclareMathOperator{\Tr}{Tr}
\DeclareMathOperator{\lcm}{lcm}
\DeclareMathOperator{\rk}{rk}

\numberwithin{equation}{section}

\def\be{\begin{equation}}
\def\ee{\end{equation}}

\allowdisplaybreaks[3]

\title{\begin{flushright}{\vspace{-0.8cm}\small MPP-2011-69}\end{flushright}
\vspace{0.5cm}Symmetries of K3 sigma models}

\author{Matthias R.~Gaberdiel$^1$, Stefan Hohenegger$^2$ and
Roberto Volpato$^1$
\\ ~
\\
${}^{1}$ Institut f\"ur Theoretische Physik \\
\;\; ETH Zurich \\
\;\; CH-8093 Z\"urich \\
\;\; Switzerland  \\
\\
${}^{2}$ Max-Planck-Institute for Physics  \\
\;\; F\"ohringer Ring 6  \\
\;\; D-80805 M\"unchen  \\
\;\; Germany}

\abstract{It is shown that the supersymmetry-preserving automorphisms of any non-linear $\sigma$-model
on K3 generate a subgroup of the Conway group $Co_1$. This is the stringy generalisation of the classical
theorem, due to Mukai and Kondo, showing that the symplectic automorphisms of any K3 manifold
form a subgroup of the Mathieu group $\MM_{23}$. The Conway  group $Co_1$ contains
the Mathieu group $\MM_{24}$ (and therefore in particular  $\MM_{23}$) as a subgroup.
We confirm the predictions of the Theorem with three explicit CFT realisations of K3:
the ${\mathbb T}^4/\ZZ_2$ orbifold at the self-dual point, and the two Gepner models $(2)^4$ and 
$(1)^6$. In each case we demonstrate that their symmetries do not form a subgroup of $\MM_{24}$,
but lie inside $Co_1$ as predicted by our Theorem.}

\begin{document}

\section{Introduction}

Recently, a hidden $\MM_{24}$ symmetry of the elliptic genus of  K3 has attracted some
attention. This development started with the observation of Eguchi, Ooguri and
Tachikawa \cite{EOT} who noted that the first few multiplicities with which the ${\cal N}=4$ characters 
appear in the elliptic genus of K3 are sums (with integer coefficients) of dimensions of representations of the
Mathieu group $\MM_{24}$. The appearance of these dimensions suggests that the underlying 
vector space (consisting of the states that contribute to the elliptic genus) carries an action of
$\MM_{24}$. Assuming this group action one can then also define the `twining genera'\footnote{These
are the analogues of the so-called MacKay-Thompson series for Monstrous Moonshine, see
\cite{Gannon} for a modern review.}, {\it i.e.}\ the elliptic genus with the insertion of a group
element $g\in \MM_{24}$,
\be
\Tr_{\rm RR} \Bigl( g \, y^{J_0} q^{L_0-\frac{c}{24}} (-1)^F\, \bar{q}^{\bar{L}_0-\frac{c}{24}} (-1)^{\bar{F}} \Bigr) \ ,
\ee
and the usual string arguments suggest that these twining 
genera must have good modular properties under some congruence subgroup of $SL(2,\mathbb{Z})$. 
Using these modular properties as well as the explicit knowledge of the first few coefficients that follow
 from the observation of \cite{EOT}, all twining genera could be determined
\cite{Cheng:2010pq,Gaberdiel:2010ch,Gaberdiel:2010ca,Eguchi:2010fg}. 
In turn, this leads to a stringent test of the proposal:
knowing all  twining genera one can deduce the decomposition of all
multiplicity spaces into $\MM_{24}$ representations, and it was found that at least for the
first 1000 coefficients, non-negative integer multiplicities appear 
\cite{Gaberdiel:2010ca,Eguchi:2010fg}.\footnote{We thank
Yuji Tachikawa for informing us that he has now checked the decomposition for the first $1000$
coefficients.} This analysis therefore gives very convincing
evidence for a hidden $\MM_{24}$ symmetry underlying the elliptic genus of K3. Further
support for this conjecture was given in \cite{Govindarajan:2010cf} (see also 
\cite{Govindarajan:2010fu,Govindarajan:2009qt}).

Part of this symmetry can be understood geometrically. First of all, the elliptic genus
is independent of the specific point in the moduli space of K3 that is considered, and
thus the symmetries of the elliptic genus are in some sense the union of all 
symmetries that are present at different points in moduli space. The geometrical symmetries
of K3,  {\it i.e.}\ the symplectic automorphisms, have been studied some time ago by 
Mukai and Kondo \cite{Mukai,Kondo}, and they found that at any point in moduli space
these symmetries form a subgroup of $\MM_{23}$. The Mathieu group $\MM_{23}$ is a 
maximal subgroup of $\MM_{24}$, and thus this argument `explains' part of the observation of 
\cite{EOT}.

As is familiar for example from T-duality, string theory typically has more than just
the geometric symmetries, and one may therefore expect that
the remaining symmetries of $\MM_{24}$ may be accounted for by `stringy symmetries' 
(see also \cite{Taormina:2010pf}). In order
to analyse this question, we study in this paper the stringy version of the Mukai-Kondo theorem. 
More specifically, we classify the spacetime supersymmetry preserving automorphisms of the
non-linear $\sigma$-model at an arbitrary point in the moduli space of K3.
From the point of view of the worldsheet, these symmetries are characterised by the property that they
preserve the ${\cal N}=(4,4)$ superconformal algebra, as well as the  spectral flow operators.
Given the observation of \cite{EOT}, one may have expected that all these 
symmetries should form a subgroup of $\MM_{24}$, but actually the answer is more complicated:
\smallskip

\noindent {\bf Theorem:}
{\it Let $G$ be the group of symmetries of a non-linear $\sigma$-model on $K3$ preserving 
the $\N=(4,4)$ superconformal algebra as well as the spectral flow operators. 
Then one of the following possibilities holds:
\begin{list}{{\rm (\roman{enumi})}}{\usecounter{enumi}}
\item $G=G'. G''$, where $G'$ is a subgroup of $\ZZ_2^{11}$, and $G''$ is a subgroup of  
$\MM_{24}$ with at least four orbits when acting as a permutation on
$\{1,\ldots,24\}$
\item $G=5^{1+2}.\ZZ_4$
\item $G=\ZZ_3^4.A_6$
\item $G=3^{1+4}.\ZZ_2.G''$, where $G''$ is either trivial, $\ZZ_2$, $\ZZ_2^2$ or $\ZZ_4$.
\end{list}}
\smallskip

\noindent Here $p^{1+2n}$ denotes an extra special group of order $p^{1+2n}$,
 and $N.Q$ denotes a group $G$ for which $N$ is a normal subgroup such that 
 $G/N\cong Q$ (for an exposition of our mathematical notation and conventions see 
 Appendix~\ref{s:notation}). Note that except for case (i) with $G'$ trivial, these groups 
 are not subgroups of $\MM_{24}$; in particular, for cases (ii)-(iv) this follows from the fact 
 that their order does not divide 
\be 
|\MM_{24}|=2^{10}\cdot 3^3\cdot 5\cdot 7\cdot 11\cdot23\ .
\ee
On the other hand, all groups in (i)--(iv) {\em are} subgroups of the Conway group 
$Co_1$,  and thus the analogue of the Mukai-Theorem is that the stringy symmetries all
lie in $Co_1$. One may take this as evidence that the elliptic genus of K3
should in fact have a hidden $Co_1$ symmetry, but from its decomposition in terms of 
$\N=4$ elliptic genera, we have not seen any hint for this. In any case, the result of the Theorem
means that the explanation of the $\MM_{24}$ symmetry appearing in the elliptic genus of K3 
must be more subtle. 

\subsection{Sketch of Proof}

Let us briefly sketch the proof of the Theorem, before returning to more general considerations below;
the details of this argument as well as the underlying assumptions will be spelled out in 
Section~\ref{symmetries} and Appendix~\ref{s:latticeproofs}.

The basic strategy of the proof follows closely the proof of the Mukai-theorem, given by Kondo.
The moduli space of sigma-models on K3 has the form
\be 
\M_{\rm K3} = O(\Gamma^{4,20})\backslash O(4,20)/(O(4)\times O(20))\ .
\ee
Here the Grassmannian $O(4,20)/(O(4)\times O(20))$ parametrises the choice of a positive
definite four-dimensional subspace $\Pi\subset \RR^{4,20}$, and $O(\Gamma^{4,20})$ is the group of
automorphisms of the even unimodular lattice $\Gamma^{4,20}\subset \RR^{4,20}$ of signature $(4,20)$. 
We may think of $\Gamma^{4,20}$ as the integral homology of K3, {\it i.e.}\ as the 
D-brane charge lattice, while the position of $\Pi$ is specified by the choice of a Ricci-flat metric
and a $B$-field on K3. In particular, it therefore determines the four left- and right-moving supercharges.

The supersymmetry preserving automorphisms of the non-linear $\sigma$-model
characterised by $\Pi$ generate the group $G\equiv G_\Pi$ that consists of 
those elements of $O(\Gamma^{4,20})$ 
that leave $\Pi$ pointwise fixed.  We denote by $L^G$ the sublattice of $G$-invariant vectors of 
$L\equiv \Gamma^{4,20}$, and define $L_G$ to be its orthogonal complement. 
By construction, $\Pi$ is a subspace of the real vector space $L^G\otimes \RR\subset \RR^{4,20}$,
and since $\Pi$ has signature $(4,0)$, the orthogonal complement $L_G$ must be a negative definite
lattice of rank at most $20$. The basic idea is now to embed $L_G(-1)$ --- the $(-1)$ means
that we change the sign of its intersection matrix --- into the Leech lattice $\Lambda$. Such an embedding 
exists, provided we assume that $L_G(-1)$ does not contain any vectors of length squared two
(which would signal some gauge enhancement and thus would lead to a singular CFT). Since the 
action of $G$ fixes all vectors of $\Lambda$ orthogonal 
to $L_G(-1)$, it follows that $G$ must be a subgroup of $Co_1 \subset Co_0={\rm Aut}(\Lambda)$ that 
fixes pointwise a sublattice of the Leech lattice of rank at least $4$. A more careful 
analysis then leads to the separate cases (i)-(iv) above.

\subsection{Comments and Outline}

Since the result of the Theorem is somewhat contrary to expectations, we have also studied
a few explicit conformal field theories describing K3 at different points in moduli space in detail. 
In particular, we have done this for (A) the orbifold point ${\mathbb T}^4/{\mathbb Z}_2$;
(B) the orbifold point ${\mathbb T}^4/{\mathbb Z}_4$ which is equivalent to the Gepner model
$(2)^4$; and (C) the Gepner model $(1)^6$. Given that these descriptions are very explicit, it
is possible to identify (at least some of) the supersymmetry-preserving automorphisms.
In each case we have computed the resulting symmetry group, and compared it with the 
possibilities allowed for by the Theorem. We find that (C) realises case (iii), while 
both (A) and (B) correspond to case (i) with $G'$ non-trivial. In particular, all of these
cases therefore describe K3s for which the stringy symmetries do not lie inside 
$\MM_{24}$. 

{}From the point of view of the argument leading to the Theorem, the symmetries of the 
worldsheet theory should have an interpretation as lattice symmetries. Actually, this
point of view can also be directly understood in conformal field theory: as mentioned before, 
the lattice $\Gamma^{4,20}$ can be identified with the D-brane charge lattice, and $\Pi$ 
describes the four left- and right-moving supercharges. Thus the supersymmetry preserving 
automorphisms
of the CFT should be in one-to-one correspondence with the symmetries of the D-brane charge 
lattice that leave $\Pi$ pointwise invariant. Using conformal field theory methods, it is 
fairly straightforward to determine the D-brane charge lattice, as well as the $\Pi$-preserving 
symmetries. For each of the three cases (A)--(C) we have verified that 
the resulting symmetry groups reproduce precisely those obtained from the explicit construction
of the symmetry generators above.  Incidentally, this method to determine the 
supersymmetry preserving automorphisms by analysing the D-brane charge lattice constitutes
a nice general approach that can be 
applied to any non-linear $\sigma$-model on K3.
\bigskip

The paper is organised as follows. In the following section (Section~\ref{symmetries}) 
we give a more detailed description of the main Theorem and the assumptions that go into its 
proof. Section~\ref{s:orbifold} is devoted to the study of the orbifold point ${\mathbb T}^4/{\mathbb Z}_2$.
Among other things, we calculate  the twining genera for the various symmetries
and find the twining genus of the 2B conjugacy class of $\MM_{24}$ that does not lie inside
$\MM_{23}$. (The corresponding symmetry is the stringy 4-fold T-duality symmetry $\T$, see
(\ref{Z2T}).) However,
we also find symmetries that do not lie inside $\MM_{24}$, and whose twining genus 
does not agree with the twining genus of any conjugacy class in $\MM_{24}$. (The simplest
example is the `quantum symmetry' $\Q$, see (\ref{Z2Q}).) In Section~\ref{s:16}, the 
same analysis is done for the $(1)^6$ Gepner point, while Section~\ref{s:24} deals with
the $(2)^4$ Gepner  model. Our notation and some basic mathematical background is
described in Appendix~\ref{s:notation}, while some of the details of the proof of the Theorem have
been delegated to Appendix~\ref{s:latticeproofs}.  Finally, Appendix~\ref{s:Gepner} contains
some of the details of the D-brane charge analysis for the Gepner models.

\section{Symmetries of Non-linear $\sigma$-models on K3}\label{symmetries}

In the following we shall consider two-dimensional theories with $\N=(4,4)$ superconformal
symmetry and central charge $c=6$. Theories of this type can be classified according to their
elliptic genus $\phi(\tau,z)$. The symmetries of the theory constrain the elliptic genus to be a
weak Jacobi form of weight $0$ and index $1$, and the only possibilities are $\phi(\tau,z)=0$, 
which corresponds to the case of the target space being ${\mathbb T}^4$, or
\be\label{K3eg}
\phi(\tau,z)=8\sum_{i=2}^4\frac{\vartheta_i(\tau,z)^2}{\vartheta_i(\tau,0)^2}=2y+20+2y^{-1}+O(q)\ ,
\ee
where $\vartheta_i$ are the Jacobi theta functions. This second case arises if the target space is K3
and it is the main focus of the present paper.

\subsection{Moduli Space of Non-linear  $\sigma$-models on K3}

As mentioned in the introduction, the 
moduli space of $\N=(4,4)$ theories with elliptic genus \eqref{K3eg} is believed to be the quotient
(see for example \cite{Aspinwall:1996mn,NahmWend})
\be \label{K3moduli}
\M_{\rm K3} = O(\Gamma^{4,20})\backslash O(4,20)/(O(4)\times O(20))\ .
\ee
Here the Grassmannian $O(4,20)/(O(4)\times O(20))$ parametrises the choice of a positive
definite four-dimensional subspace in $\RR^{4,20}$, and $O(\Gamma^{4,20})$ is the group of
automorphisms of the even unimodular lattice $\Gamma^{4,20}$ with signature $(4,20)$. 

Geometrically, we can think of $\Gamma^{4,20}$ as the integral homology lattice 
$H_{\rm even}(X,\ZZ)$ of the K3 manifold $X$, with the bilinear form given by the intersection number. 
The space $\RR^{4,20}$ is interpreted as the real even
cohomology $H^{\rm even}(X,\RR)$ endowed with the cup product, and the embedding
$\Gamma^{4,20}\subset \RR^{4,20}$ is realised through Poincar\'e duality
$H_{\rm even}(X,\ZZ)\cong H^{\rm even}(X,\ZZ)\subset H^{\rm even}(X,\RR)$. The non-linear
$\sigma$-model is determined by choosing a Ricci-flat metric and a B-field on the manifold $X$;
this corresponds to the choice of the $4$-dimensional subspace $\Pi\subset H^{\rm even}(X,\RR)$,
though the relationship is rather involved (see \emph{e.g.}~\cite{Aspinwall:1996mn}).

In string theory, the homology lattice can be identified with the lattice of 
D-brane charges, and the intersection number for $\alpha,\beta\in H_{\rm even}(X,\ZZ)$
is reproduced by the overlap
\be \label{inter}
\alpha\cdot\beta = \llangle \alpha \| q^{\frac{1}{2}(L_0+\tilde L_0)-\frac{c}{12}}(-1)^{F_L}
\|\beta\rrangle_{\rm RR}\ .
\ee
Here only the RR part of the boundary states $\|\alpha\rrangle,\|\beta\rrangle$ corresponding to the
D-branes wrapping the cycles $\alpha$ and $\beta$ contribute \cite{Brunner:1999jq}. (Alternatively, 
(\ref{inter}) is the Witten index in the R-sector of the relative open string.) In this picture, the dual space of 
real cohomology is naturally identified with the space of $24$ (anti-)chiral RR 
ground states with $h=\bar h=\tfrac{1}{4}$.
Under the action of ${\rm SU}(2)_L\times {\rm SU}(2)_R$, which is part of the $\N=(4,4)$ 
superconformal symmetry, the $24$ RR ground states split into a four-dimensional 
$({\bf 2},{\bf 2})$-representation and $20$ singlets. The 
four-dimensional subspace $\Pi\subset \RR^{4,20}$ is then to be identified with the
subspace of RR states transforming in the $({\bf 2},{\bf 2})$-representation.
\medskip

\subsection{Characterisation of $\N=(4,4)$ Preserving Symmetries}

Our goal is to classify the discrete symmetries $\tilde G_\Pi$ of a given $\N=(4,4)$ theory 
parametrised by $\Pi$ in the 
moduli space $\M_{\rm K3}$. Any symmetry $g\in \tilde{G}_\Pi$ must obviously leave $\Pi$
invariant. It must therefore either be an element of 
$O(\Gamma^{4,20})$, or it must act trivially on the Grassmannian in (\ref{K3moduli}). However,
the latter case would  imply that the symmetry exists everywhere in moduli space,
and we know (for example from studying deformations of the orbifold line) that this is not possible.
Thus we conclude that the symmetries $\tilde{G}_\Pi$ of the theory
at $\Pi$ is precisely the subgroup of $O(\Gamma^{4,20})\subset O(4,20,\RR)$ that 
leaves $\Pi$ (setwise) fixed.
\medskip

A general symmetry $g\in \tilde G_\Pi$ will preserve the
$\N=(4,4)$ superconformal algebra only up to an automorphism. From now on we want to focus
on the subgroup $G_\Pi\subset \tilde G_\Pi$ that actually leaves the $\N=(4,4)$ superconformal algebra 
invariant, {\it i.e.}\ for which this automorphism is trivial, and that preserve the spectral flow operators.
These symmetries are characterised by the condition that they preserve space-time
supersymmetry \cite{Banks:1988yz}. They are also relevant for the analysis of
`Mathieu Moonshine'  \cite{EOT} that was reviewed at the beginning of the Introduction.
Indeed, $\MM_{24}$ appears to act on the multiplicity spaces with which the
${\cal N}=4$ representations appear in the elliptic genus, and hence must commute with
the left-moving $\N=4$ superconformal algebra. Furthermore, according to the proposal of
\cite{Cheng:2010pq,Gaberdiel:2010ch}, the four RR ground states that transform in the $({\bf 2},{\bf 2})$ of the
${\rm SU}(2)_L\times {\rm SU}(2)_R$ --- these are the spacetime supercharges ---
sit in a singlet representation of $\MM_{24}$. Thus the symmetries that are described by
$\MM_{24}$ should leave the full $\N=(4,4)$ superconformal algebra invariant, and
preserve the spectral flow operators. Obviously, 
$\MM_{24}$ is not a subgroup of $O(\Gamma^{4,20})$, and thus
we cannot explain the full $\MM_{24}$ symmetry by looking at just one point in moduli space. However, 
the elliptic genus is constant over moduli space, and one may therefore expect that
we can account for the entire $\MM_{24}$ by putting information from different points in
moduli space together. This is one of the main motivations for classifying the
symmetry groups at different points in moduli space.

In any case, since  $\Pi\subset \RR^{4,20}$ can be identified with the
subspace of RR states transforming in the $({\bf 2},{\bf 2})$-representation, we conclude that

\begin{verse}\label{th:generic} 
\noindent {\it The subgroup $G_\Pi$ of symmetries of the K3 $\sigma$-model 
characterised by $\Pi\in\RR^{4,20}$ that leave the $\N=(4,4)$ superconformal algebra 
invariant and that preserve the spectral flow operators,
is the subgroup of $O(\Gamma^{4,20})\subset O(4,20,\RR)$ that leaves 
$\Pi$ (pointwise) fixed.}
\end{verse}

In the following, by a symmetry of an $\N=(4,4)$ theory in $\M_{\rm K3}$, we will always mean a
transformation with these properties. Note that this restriction excludes some very interesting symmetries,
for example mirror symmetry in a self-mirror theory. On the other hand, the extension of our arguments
to more general cases is fairly straightforward.

\subsection{Classification of the Groups of Symmetries}\label{s:classif}

In this section we will classify the possible groups $G_\Pi$. 
The related problem in classical geometry has been previously considered
by Mukai \cite{Mukai}, who classified the groups of symplectic automorphisms of
K3 surfaces. Mukai proved that the symplectic automorphisms of any K3 surface form a subgroup of
$\MM_{23}$, which in turn is a maximal subgroup of $\MM_{24}$. The Mathieu group
$\MM_{23}$ is finite and its subgroups are well studied, so that the Mukai theorem provides a very
explicit description of all symplectic automorphisms.
In the following, we will extend the Mukai theorem to the classification of the symmetry groups
of the $\sigma$-models. In particular, we will show that $G_\Pi\subset Co_1$, the 
Conway group $Co_1$. Note that $Co_1$ contains $\MM_{23}$ (as well as $\MM_{24}$) as a subgroup.

As argued in the previous section, the  symmetries of interest form the
subgroup of $O(\Gamma^{4,20})\subset O(4,20,\RR)$ that fix (pointwise) the positive-definite
four-dimensional subspace $\Pi\subset \RR^{4,20}$, characterising the relevant point in moduli space.
However, not all choices of $\Pi$ correspond to well-defined conformal
field theories. In particular, when $\Pi$ is orthogonal to a vector $v\in\Gamma^{4,20}$ of norm
$v^2=-2$ (usually called a root of $\Gamma^{4,20}$), the
corresponding non-linear $\sigma$-model is not well defined \cite{Aspinwall:1995xy}. This subtlety
can be understood by considering the model as an internal CFT in type IIA superstring theory. This theory
is dual to heterotic string theory compactified on ${\mathbb T}^4$. Generically, the corresponding
low energy effective field theory contains an abelian gauge group $U(1)^{24}$. However, when
$\Pi$ is orthogonal to a root $v$, the gauge group is enhanced to a non-abelian gauge group, and
$v$ is interpreted as a root of the corresponding Lie algebra. The additional states in type IIA superstring
theory are interpreted as D-branes becoming massless at this point of the moduli space. This means
that the corresponding perturbative superconformal field theory cannot describe correctly all massless
degrees of freedom of the theory, and hence the non-linear $\sigma$-model is expected to be 
inconsistent \cite{Strominger:1995cz}. Therefore, in the following, we shall exclude the points
in the moduli space where $\Pi$ is orthogonal to a root. It is believed that these are the only 
singular points in the moduli space.
\medskip

Our strategy to characterise the groups $G_\Pi$ is inspired by the proof of  the Mukai theorem given by
Kondo \cite{Kondo}. Let $\Pi\in\RR^{4,20}$ be a $4$-dimensional positive definite space, not orthogonal
to any root (vector of norm $-2$) in $L\equiv \Gamma^{4,20} \subset \RR^{4,20}$, and let
$G\equiv G_\Pi\subset O(\Gamma^{4,20})$ be the subgroup of lattice automorphisms fixing $\Pi$ 
pointwise. We denote by $L^G$ the sublattice of vectors fixed by $G$
\be\label{LuG}
\Gamma^{4,20}\supset L^G:=\{v\in\Gamma^{4,20}|g(v)=v\text{ for all }g\in G\}\ ,
\ee
and by $L_G$ its orthogonal complement
\be\label{LG}
\Gamma^{4,20}\supset L_G:=\{w\in\Gamma^{4,20}|w\cdot v=0\text{ for all }v\in L^G\}\ .
\ee
By definition, the real vector space $L^G\otimes \RR\subset \RR^{4,20}$ will contain $\Pi$,
$\Pi\subset L^G\otimes \RR$, 
and since $\Pi$ has signature $(4,0)$, the orthogonal complement $L_G$ must be negative definite
and have rank at most $20$. Furthermore, every vector in $L_G$ is orthogonal to $\Pi$, so that, by our
assumption on $\Pi$, $L_G$ contains no roots. The proof of the Theorem then proceeds as follows --- 
the relevant details are given in Appendix~\ref{s:latticeproofs}:
\begin{itemize}
\item First (see Appendix~\ref{s:GinOGamma}), we prove that $L_G(-1)$, can be embedded in the
even unimodular lattice $\Gamma^{25,1}$. The action of $G$ on $L_G$ extends to an action on
$\Gamma^{25,1}$, which fixes all vectors orthogonal to $L_G(-1)$ in $\Gamma^{25,1}$. Thus,
$G$ is a subgroup of ${\rm Aut}(\Gamma^{25,1})$.
\item Next (see Appendix~\ref{s:proofConway}), using the fact that $L_G(-1)$ contains no vectors of
norm $2$ and the properties of ${\rm Aut}(\Gamma^{25,1})$, we show that $L_G(-1)$ must be contained
in a positive definite sublattice of $\Gamma^{25,1}$, namely the Leech lattice
$\Lambda$. This is the unique $24$-dimensional even unimodular lattice with no roots and its group
of automorphisms is the Conway group $Co_0$ \cite{Conway}. This group can be obtained by
extending the sporadic finite simple group $Co_1$ \cite{Atlas} of order
\be
|Co_1|=2^{21}\cdot 3^9\cdot 5^4\cdot 7^2\cdot 11\cdot 13\cdot23\sim  4\times 10^{18}\ ,
\ee
by the $\ZZ_2$ symmetry that changes the sign of all vectors in $\Lambda$. Once again, the action of
$G$ on $L_G(-1)$ can be extended to an action on $\Lambda$ which fixes all vectors orthogonal to
$L_G(-1)$. This means, in particular, that the $\ZZ_2$ symmetry that changes the sign of all vectors in
$\Lambda$ is not an element of $G$, since it has no non-trivial fixed vectors. It follows that $G$ is a
subgroup of the finite simple group $Co_1$.
\end{itemize}
\noindent Combining these results thus leads to the natural analogue of the Mukai-Theorem: 
\bigskip

\noindent {\bf Proposition.} {\em The group $G_\Pi$ of 
symmetries of any non-linear $\sigma$-model on K3 is a subgroup of the Conway group 
$Co_1 \subset Co_0={\rm Aut}(\Lambda)$ that fixes pointwise
a sublattice of the Leech lattice $\Lambda$ of rank at least $4$.}
\medskip

In order to give a more precise description of the groups of symmetries $G_\Pi$, one needs a 
detailed classification of the subgroups of $Co_0$ that fix a sublattice of rank $4$ in $\Lambda$. 
The result of this somewhat technical analysis --- the details are explained in  
Appendix \ref{s:proofs} --- is the Theorem stated in the Introduction. 

We should mention that our analysis does not actually prove that $Co_1$ is the smallest possible 
group containing all these symmetry groups. However, some simple considerations on the order 
of the groups $|G_\Pi|$ are sufficient to exclude all maximal subgroups of $Co_1$, except for
$Co_2$. Furthermore, all the cases (i)--(iv) in the Theorem are actually realised by some 
$\N=(4,4)$ model, provided we assume that every four-dimensional subspace $\Pi\subset \RR^{4,20}$, 
not orthogonal to any vector of norm $-2$ in $\Gamma^{4,20}$, leads to a consistent conformal field 
theory (see Appendix~\ref{B.4}). The characterisation of the symmetry groups as given in the Theorem is
therefore optimal.
\bigskip

In the following we shall describe in detail specific examples that realise some of the 
possibilities of the Theorem. In particular,  the ${\mathbb T}^4/\ZZ_2$ orbifold model to be discussed in 
Section~\ref{s:orbifold}   and the $(2)^4$ Gepner model of Section~\ref{s:24} 
are examples of case (i) and have symmetry groups that are not subgroups of 
$\MM_{24}$, while the Gepner model $(1)^6$ that will be studied in Section~\ref{s:16} realises
precisely case (iii). 

\section{The ${\mathbb Z}_2$ Orbifold Model}\label{s:orbifold}

In order to illustrate the general predictions of the Theorem let us consider a few specific
examples. We begin with the ${\mathbb T}^4/\ZZ_2$ orbifold  model where we take 
${\mathbb T}^4$ to be the orthogonal torus at the
self-dual radius, {\it i.e.}\ the four radii take on the self-dual value, and set the $B$-field on the
torus to zero.
We write $\TT^4= \TT^2\times \TT^2$, and label the two $\TT^2$s by $i=1,2$. For
each $\TT^2$ we use complex coordinates, and thus the left-moving bosonic
and fermionic modes are 
\begin{equation}
\alpha^{(i)}_{n} \ , \qquad  \bar{\alpha}^{(i)}_{n}  \ , \qquad \qquad
\psi^{(i)}_{n} \ , \qquad  \bar{\psi}^{(i)}_{n}  \ , \qquad i=1,2 \ ,
\end{equation}
and similarly for the right-movers. (The right-movers are denoted by a tilde.)
The ${\mathbb Z}_2$ orbifold ${\cal I}$ acts as $-1$ an all of these modes. In addition it maps
the momentum ground states ${\cal I} (p_L,p_R) = (-p_L,-p_R)$.

\subsection{The Spectrum in the RR sector}\label{sec:2.1}

We denote the states in the untwisted sector of  $\TT^4=\TT^2\times \TT^2$ by
\be
|p_L,p_R;N,\tilde{N};s;\tilde s\rangle\ ,
\ee
where $p_L=({n}+{w})/\sqrt{2}$,
$p_R=({n}-{w})/\sqrt{2}$ are the left and right momenta at the self dual radius, with
${n}\in{\mathbb Z}^4$ the momentum and
${w}\in{\mathbb Z}^4$ the winding numbers.
Furthermore, $N$ and $\tilde{N}$ denote the left and right oscillator contributions, while
$s,\tilde s$ label the Ramond ground states. More specifically,
\begin{equation}
(s;\tilde  s)=(s_1,s_2;\tilde  s_1,\tilde  s_2) \ ,
\end{equation}
where each $s_i$, $i=1,2$, can take the two values $\pm \tfrac{1}{2}$,
and the zero modes $\bar\psi_0^{(i)}$ and $\psi_0^{(i)}$ map the states with $s_i=\pm \tfrac{1}{2}$ into 
one another; the analogous statement holds for
the $\tilde  s_i$ in the right-moving sector. The RR ground states have charge $(S,\tilde{S})$
with respect to the left- and right-moving $U(1)$-current, where
\begin{equation}
S = \sum_i s_i \ , \qquad \tilde{S} = \sum_i \tilde{s}_i \ .
\end{equation}
To obtain the spectrum of the $\ZZ_2$-orbifold theory we have to project onto states that are even
under the operator $\I$ acting as
\be
\I |p_L,p_R;N,\tilde{N};s;\tilde s\rangle=
(-1)^{|N|+|\tilde{N}|}(-1)^{S+\tilde{S}} \, |-p_L,-p_R;N,\tilde{N};s;\tilde s\rangle\ ,
\ee
where $|N|$ is the total number of oscillators appearing in $N$,
and likewise for $|\tilde{N}|$.
\smallskip

\noindent In the twisted sector the states are labelled by
\be
|i;N,\tilde{N}\rangle \ ,
\ee
where $i=1,\ldots, 16$ distinguishes the $16$ different fixed points of the orbifold, while
$N,\tilde{N}$ denote again the oscillator numbers. The ground states do not carry any charge with 
respect to the left- and right-moving $U(1)$-currents. The orbifold projection in the twisted sector acts as
\be
\I\, |i;N,\tilde{N}\rangle=(-1)^{|N|+|\tilde{N}|} \, |i;N,\tilde{N}\rangle\ .
\ee
It is straightforward to calculate the elliptic genus from this description. In the untwisted sector one finds
\be\label{Z2U}
\phi^{(U)}(\tau,z)=(2y+4+2y^{-1})\prod_{n=1}^\infty\frac{(1+q^ny)^2(1+q^ny^{-1})^2}{(1+q^n)^4}=
8\frac{\vartheta_{2}(\tau,z)^2}{\vartheta_{2}(\tau,0)^2}\ ,
\ee
while the contribution of the twisted sector equals
\be \label{Z2Tw}
\phi^{(T)}(\tau,z)=
8\Bigl(\frac{\vartheta_{4}(\tau,z)^2}{\vartheta_{4}(\tau,0)^2}
+\frac{\vartheta_{3}(\tau,z)^2}{\vartheta_{3}(\tau,0)^2}\Bigr)\ .
\ee
It is easy to see that their sum, $\phi^{(U)}(\tau,z) + \phi^{(T)}(\tau,z)$, reproduces precisely
(\ref{K3eg}).

\subsection{Symmetries and Twining Genera}\label{s:3.2}

The orbifold theory possesses various symmetries that leave the $\N=(4,4)$ superconformal
algebra invariant and that preserve the spectral flow operators; in particular, we have 
\begin{list}{{\rm (\arabic{enumi})}}{\usecounter{enumi}}
\item ${\cal R}$: rotation of the two ${\mathbb T}^2$'s by $90$ and $-90$ degrees, respectively.
\item ${\cal E}$: exchanging the two  ${\mathbb T}^2$'s, together with an inversion acting on the
second ${\mathbb T}^2$, say.
\item $H_a$: half-period translations, that act as $(-1)^{p\cdot a}$,  
$a\in (\ZZ/2\ZZ)^4$, in the untwisted sector, 
and by a permutation on the $16$ twisted sectors.
\item ${\cal T}$: 4-fold T-duality. (Note that the 2-fold T-duality induces a non-trivial automorphism of the
$\N=(4,4)$ superconformal algebra.)
\item ${\cal Q}$: the quantum symmetry that acts as $+1$ on the untwisted, and as $-1$ on the twisted sector.
\end{list}
These symmetries generate the group $G=2^{1+8}\rtimes \ZZ_2^3$. The normal subgroup acting 
trivially on the untwisted sector  RR ground states with $p_L=p_R=0$ 
is the extra special group $2^{1+8}$ of order $2^9$, containing 
$\mathcal{Q}$, the center of $G$. It is generated by the half-shifts $H_a$ that form an abelian $\ZZ_2^4$, 
and by their dual symmetries $G_a=\mathcal{T}H_a\mathcal{T}$ that change the sign of half of the 
twisted sectors. The only non-trivial commutators are $H_aG_bH_aG_b={\cal Q}^{a\cdot b}$, 
where $a,b\in(\ZZ/2\ZZ)^4$. The quotient group $G/2^{1+8}\cong \ZZ_2^3$ is generated by ${\cal R}$, ${\cal E}$ 
and ${\cal T}$. Conjugation by ${\cal R}$ and ${\cal E}$ yields a permutation of the half shifts (and the 
analogous permutations for $G_a$), while conjugation by $\mathcal{T}$ exchanges $G_a$ and $H_a$.
The group $G=2^{1+8}\rtimes \ZZ_2^3$ realises case (i) of the Theorem with
$G'=\ZZ_2^6$ and $G''=\ZZ_2^4 \rtimes \ZZ_2^2$. Here $G''\subset \MM_{24}$ 
is generated by $H_a$, ${\cal E}$ and ${\cal R}$, while 
$G'\subset \ZZ_2^{11}$ is generated by $\Q$ (one $\ZZ_2$ factor), by the composition 
$\T {\cal R} {\cal E}$ (another $\ZZ_2$), and by the elements of the form 
$H_a {\cal T} {\cal R} {\cal E} H_a$ (giving $\ZZ_2^4$).

Given the explicit description of the RR sector from above, it is straightforward to calculate the
corresponding twining genera,
\be
\phi_g(\tau,z)=\Tr_{\rm RR}(g\, q^{L_0-\frac{1}{4}}\bar q^{\bar L_0-\frac{1}{4}}y^{J_0}(-1)^{F+\bar F}) \ ,
\ee
{\it i.e.}\ the elliptic genus with the insertion of the symmetry $g$.\footnote{The condition
that the symmetry preserves the full $\N=(4,4)$ superconformal algebra guarantees that the resulting twining
genus still defines a weak Jacobi form. Symmetries that only preserve the $\N=(2,2)$ superconformal
algebra typically act non-trivially on $J^\pm$,  and then the resulting twining genera do
not have the shift symmetry under $z\mapsto z+\tau$.}
In particular, we can check how
these twining genera compare with the Mathieu twining genera that were worked out in
\cite{Cheng:2010pq,Gaberdiel:2010ch,Gaberdiel:2010ca,Eguchi:2010fg}. Our explicit results are
for example
\begin{eqnarray}
\phi_\R(\tau,z) & = & 4\frac{\vartheta_2(2\tau,2z)}{\vartheta_2(2\tau,0)}
+4\frac{\vartheta_3(2\tau,2z)}{\vartheta_3(2\tau,0)} = \phi_{\rm 2A}(\tau,z) \label{orbR}\\
\phi_{\cal E}(\tau,z)& = & 4\frac{\vartheta_2(2\tau,2z)}{\vartheta_2(2\tau,0)}
+4\frac{\vartheta_3(2\tau,2z)}{\vartheta_3(2\tau,0)} = \phi_{\rm 2A}(\tau,z)\label{orbE} \\
\phi_{H_a}(\tau,z)& = & 8\frac{\vartheta_2(\tau,z)^2}{\vartheta_2(\tau,0)^2}=\phi_{\rm 2A}(\tau,z)\label{orbH} \\
\phi_\T(\tau,z) & = &
-2\vartheta_{4}(2\tau)^4\frac{\vartheta_1(\tau,z)^2}{\eta(\tau)^6}  = \phi_{\rm 2B}(\tau,z)\label{Z2T} \\
\phi_{\T H_{a}}(\tau,z) & = & \phi_{\rm 4A}(\tau,z) \qquad\qquad\qquad \quad
[\hbox{$a^2$ even, say $a= (1100)$} ]\\
\phi_{\R H_{a}}(\tau,z) & = & \phi_{\rm 4B}(\tau,z) \qquad \qquad \qquad \quad
[\hbox{{\it e.g.}\ for  $a=(1111)$}]\\ 
\phi_{\T H_{a}}(\tau,z) & = & \phi_{\rm 4C}(\tau,z)   \qquad \qquad \qquad \quad
[\hbox{{\it e.g.}\ for  $a=(1000)$}] \\
\phi_{\Q}(\tau,z) &= & 
8\Bigl(\frac{\vartheta_{2}(\tau,z)^2}{\vartheta_{2}(\tau,0)^2}
- \frac{\vartheta_{3}(\tau,z)^2}{\vartheta_{3}(\tau,0)^2}
-\frac{\vartheta_{4}(\tau,z)^2}{\vartheta_{4}(\tau,0)^2}\Bigr) \nonumber \\
& = & 2 \phi_{\rm 2A}(\tau,z) - \phi_{\rm 1A} (\tau,z) \qquad \label{Z2Q} \\[4pt]
\phi_{\Q\T H_a G_b}(\tau,z) & = &
-2 \, \frac{\vartheta_3(2\tau)^4\, \vartheta_1(\tau,z)^2}{\eta(\tau)^6} 
- 2 \frac{\vartheta_3(\tau,z)^2}{\vartheta_3(\tau)^2} 
- 2 \frac{\vartheta_4(\tau,z)^2}{\vartheta_4(\tau)^2} \label{Z24p} \\
& = & \tfrac{1}{2}(-\phi_{\rm 1A}+\phi_{\rm 2A}+2\phi_{\rm 4B}) \quad  \;
[\hbox{{\it e.g.}\ for  $a=(1100)$, $b=(0011)$}] \ .  \nonumber
\end{eqnarray}
It is worth pointing out that the generators  $H_a$, ${\cal E}$ and ${\cal R}$ that generate 
$G''\subset \MM_{24}$ lead to twining genera that directly agree with  $\MM_{24}$ twining genera.
On the other hand, the twining genera of the quantum symmetry $\Q$, see (\ref{Z2Q}), and of
the group elements $g =\Q\T H_{a}G_{b}$ for suitable choices of $a$ and $b$, see (\ref{Z24p}), do
not equal   {\em any} Mathieu twining genus 
(but can only be expressed in terms of linear combinations of such twining genera). 
Finally, certain twining genera involving the 4-fold T-duality, namely $\T$ and $\T H_a$ for suitable $a$, 
give rise to  twining genera, whose conjugacy classes (2B, 4A and 4C) do not lie inside $\MM_{23}$. 
This ties in with the fact that T-duality is a non-geometric symmetry.

\subsection{The D-brane Charge Lattice}

As suggested by the proof of the Theorem (and  explained in Section~\ref{symmetries}), 
we should also be able to characterise the symmetry group of the model as the 
symmetries of the D-brane charge lattice that leave the 4-dimensional subspace
$\Pi$ (corresponding to the 4 supercharges) invariant. For the case at hand, the D-brane
charge lattice can be  computed straightforwardly since the primitive branes may be 
taken to be fractional D0, D2 and D4 branes. They can be constructed as explained, for example, in
\cite{Gaberdiel:2000jr}  (see also \cite{Bergman:1999kq}). If we denote the Ishibashi states
in the untwisted and the $i^{\rm th}$ twisted sector by  $| {\rm Dp,U} \rangle\!\rangle$ and
$| {\rm Dp},i \rangle\!\rangle$, respectively, the fractional branes have the structure
\begin{eqnarray}
|\!| {\rm D4},\epsilon, \delta \rangle\!\rangle_{\rm f} & = & \tfrac{1}{2} | {\rm D4,U} \rangle\!\rangle
+ \tfrac{\epsilon}{4} \sum_{i=1}^{16}  \delta_i \, | {\rm D4},i \rangle\!\rangle \\
|\!| {\rm D2(jk)},\epsilon,\delta \rangle\!\rangle_{\rm f} & = & \tfrac{1}{2} | {\rm D2(jk),U} \rangle\!\rangle
+ \tfrac{\epsilon}{2}  \sum_{i \in  P_{jk}} \delta_i\,  | {\rm D2}(jk),i \rangle\!\rangle \\
|\!| {\rm D0},i,\epsilon \rangle\!\rangle_{\rm f} & = & \tfrac{1}{2} | {\rm D0,U} \rangle\!\rangle
+  \epsilon| {\rm D0},i \rangle\!\rangle \ ,
\end{eqnarray}
where $|\!| {\rm D0,i,f} \rangle\!\rangle$ is the fractional D0-brane at the $i^{\rm th}$ fixed point,
while $|\!| {\rm D2(jk),f} \rangle\!\rangle$ denotes the fractional D2-branes oriented along the
(jk) direction, with $P_{jk}$ the set of four fixed points between which the D2-brane is `spanned'.
Furthermore, $\epsilon$ and $\delta_i$ are signs, and the configurations of signs $\delta_i$
arise from Wilson lines, {\it i.e.}\ not all configurations of signs are allowed.
The complete even self-dual charge lattice is spanned by\footnote{The explicit description
of the relevant branes and their intersection matrix is given in the \LaTeX\ source.}
\begin{list}{{\rm (\Roman{enumi})}}{\usecounter{enumi}}
\item 9 fractional D4-branes: one has no Wilson line and all twisted charges $+\tfrac{1}{4}$
($\epsilon=+1$); one has no Wilson line and all twisted charges $-\tfrac{1}{4}$
($\epsilon=-1$); the remaining 7 fractional D-branes have twisted charge $+\tfrac{1}{4}$ 
at the origin, and different choices of Wilson lines.
\item 9 fractional D0-branes: one sits at the origin and has twisted charge $+1$; one sits
at the origin and has twisted charge $-1$; the remaining 7 fractional D0-branes have charge $+1$
and sit at different fixed points.
\item 6 fractional D2-branes, oriented along all $6= {4 \choose 2}$ 2-planes, without any Wilson lines
({\it i.e.}\ $\delta_i=+1$) and positive twisted charge ($\epsilon=+1$).
\end{list}
The non-zero entries of the intersection form are 
\begin{eqnarray}
& & \langle\!\langle {\rm D4}, \! {\rm U}|   {\rm D0}, \! {\rm U} \rangle \!\rangle = 2 \ , \quad \nonumber \\
& & \langle\!\langle {\rm D2(12)}, \!{\rm U}\, |   {\rm D2(34)}, \!{\rm U} \rangle \!\rangle = 
\langle\!\langle {\rm D2(13)},\! {\rm U}\, |   {\rm D2(24)}, \!{\rm U} \rangle \!\rangle = 
\langle\!\langle {\rm D2(14)}, \!{\rm U}\, |   {\rm D2(23)}, \!{\rm U} \rangle \!\rangle = 2 \nonumber \\
&& \langle\!\langle {\rm D}*,i | {\rm D}*,j\rangle\!\rangle = - 2 \delta_{ij} \ .
\end{eqnarray}
It is then straightforward to check that the intersection matrix of the above D-branes has
determinant $1$, {\it i.e.}\ that these $24$ branes generate indeed the full charge lattice. 

The RR charges that transform in the $({\bf 2},{\bf 2})$-representation of \
${\rm SU}(2)_L\times {\rm SU}(2)_R$
and hence span $\Pi$ are precisely those carried by the bulk brane combinations
\be
\bigl( {\rm D0} + {\rm D4} \bigr)\ , \qquad \bigl({\rm D2(12)} +  {\rm D2(34)} \bigr)\ , \qquad
\bigl({\rm D2(13)} +  {\rm D2(24)} \bigr) \ , \qquad
\bigl({\rm D2(14)} +  {\rm D2(23)} \bigr) \ ,
\ee
where D2(ij) denotes the bulk D2-brane oriented along the (ij) direction; this can be
deduced from the analysis of \cite{Brunner:2006tc}, see in particular Appendix~A.2. The
orthogonal complement (with respect to the intersection form) then turns out to be a lattice of rank~20.
Upon changing the sign of its quadratic form, this lattice can be embedded into the Leech lattice 
$\Lambda$. Its orthogonal complement in $\Lambda$ is generated by four vectors 
$y_1,\ldots,y_4\in\Lambda$ with $y_i\cdot y_j=4\delta_{ij}$. According to our general argument, 
the group of symmetries $G$ of the model is isomorphic to the group of automorphisms of 
$\Lambda$ that act trivially on $y_1,\ldots,y_4$. Since $G$ fixes some vectors of norm 
$8$ in $\Lambda$ (for example $y_1+y_2$), it must be a subgroup of $\ZZ_2^{12}\rtimes\MM_{24}$,
see Appendix~\ref{s:proofs}.
A more detailed analysis shows that $G=2^{1+8}\rtimes \ZZ_2^3$, thus matching the results from 
above.

\section{The Gepner Model $(1)^6$}\label{s:16}

Next we consider the $(1)^6$ Gepner model which will turn out to realise case (iii) of the
Theorem. It is constructed by taking six tensor powers of the $\N=2$ minimal
model at $k=1$,  subject to the $\ZZ_3$ orbifold projection generated by
\be \otimes_{i=1}^6\Phi^l_{m_i,s_i;\bar m_i,\bar s_i}\mapsto
e^{\frac{2\pi i}{3}\sum_i m_i}\otimes_{i=1}^6\Phi^l_{m_i,s_i;\bar m_i,\bar s_i}\ .
\ee
For a short review of Gepner models, as well as an outline of our notations and conventions see 
Appendix~\ref{s:Gepner}. 

\subsection{The Spectrum}

The orbifold theory has an untwisted ($\Hh^{(0)}$) and two twisted ($\Hh^{(1)}$
and $\Hh^{(-1)}$) sectors, with spectrum
\be \Hh^{(n)}=\bigotimes_{i=1}^6\Hh_{l_i,m_i+n,s_i}\otimes\bar\Hh_{l_i,m_i-n,\bar s_i}\ .
\ee
Invariance under the orbifold symmetry requires
\be \sum_{i=1}^6 m_i\equiv 0\mod 3\ .
\ee
For $k=1$, we may take $l=0$, with $m\in\ZZ/6\ZZ$ and $s\in\ZZ/4\ZZ$.
The only states that contribute to the elliptic genus are the RR states with
$\bar h=\frac{1}{4}$.
It is easy to see that the condition $\bar h=\tfrac{1}{4}$ together with the $U(1)$-charge integrality
condition (that follows from orbifold invariance) 
is only satisfied if the right-moving ground state is of the form
\be (0,1,1)^{\otimes 6},\quad (0,-1,-1)^{\otimes 6}\ ,\quad(0,1,1)^{\otimes 3}(0,-1,-1)^{\otimes 3}\ .
\ee
In the last case, all the $20$ different permutations of the factors should be considered. Thus,
the RR states for which the right-movers are R ground states are explicitly 
\begin{equation}
\begin{array}{llr}
n=0 : \qquad & \otimes_i^6 \Phi^0_{1,s_i;1,1} & (1\text{ state}) \\[2pt]
                       &\otimes_i^6 \Phi^0_{-1,s_i;-1,-1} & (1\text{ state}) \\[2pt]
            &\bigl(\otimes_i^3 \Phi^0_{1,s_i;1,1}\bigr)\otimes\bigl(\otimes_i^3 \Phi^0_{-1,s_i;-1,-1}\bigr)
            \text{ and permutations} \quad & (20\text{ states}) \\[4pt]
n=1: \qquad  &\otimes_i^6 \Phi^0_{3,s_i;1,1},& (1\text{ state}) \\[2pt]
                        &\otimes_i^6 \Phi^0_{1,s_i;-1,-1} & (1\text{ state}) \\[2pt]
            &\bigl(\otimes_i^3 \Phi^0_{3,s_i;1,1}\bigr)\otimes\bigl(\otimes_i^3 \Phi^0_{1,s_i;-1,-1}\bigr)
            \text{ and permutations} & (20\text{ states})\\[4pt]
n=-1: \qquad  &\otimes_i^6 \Phi^0_{-1,s_i;1,1} & (1\text{ state}) \\[2pt]
                         &\otimes_i^6 \Phi^0_{3,s_i;-1,-1} & (1\text{ state}) \\[2pt]
           &\bigl(\otimes_i^3 \Phi^0_{-1,s_i;1,1}\bigr)\otimes\bigl(\otimes_i^3 \Phi^0_{3,s_i;-1,-1}\bigr)
           \text{ and permutations} & (20\text{ states})
\end{array}
\end{equation}
where $s_i=1,3$. 
In each case, states with $s_i=1$ and $s_i=3$, $i=1,\ldots,6$, are mapped into one another 
under the action of the  $\N=2$ algebra of the $i^{\rm th}$ model. The contribution  to the elliptic genus of
the $\N=2$ representation containing $\otimes_i^6\Phi^0_{m_i,s_i;\bar m_i,\bar s_i}$ is
\be
\prod_i^6I^0_{m_i}(\tau,z)I^0_{\bar m_i}(\bar\tau,0) \ ,
\ee
where $I^l_m(\tau,z)$ are the $\N=2$ `characters' that are defined in (\ref{CharacterN2rep}). Since
\be\label{signs}
I^0_{-1}(\bar\tau,0)=-1\ ,\qquad I^0_{1}(\bar\tau,0)=1\ ,\qquad I^0_{3}(\bar\tau,0)=0\ ,
\ee
we obtain
\be \phi(\tau,z)=2 \sum_{\substack{m\in\ZZ/6\ZZ\\m\text{ odd}}}I^0_m(\tau,z)^6
-20\sum_{\substack{m\in\ZZ/6\ZZ\\m\text{ odd}}} I^0_m(\tau,z)^3I^0_{m+2}(\tau,z)^3\ ,
\ee
which reproduces indeed (\ref{K3eg}).

\subsection{Symmetries and Twining Genera}

Let us first describe the symmetries that  preserve the $\N=2$ superconformal symmetry. For 
$i\in\{1,\ldots, 6\}$, we  denote by $e_i$ the $\ZZ_3$ symmetry that acts as
\be
e_i(\otimes_{j=1}^6 \Phi^0_{m_j,s_j;\bar m_j,\bar s_j})
=e^{\frac{2\pi i m_i}{3}}(\otimes_{j=1}^6 \Phi^0_{m_j,s_j;\bar m_j,\bar s_j})\ .
\ee
These symmetries generate a group $\ZZ_3^5$ (because the product of all of them is the orbifold symmetry,
under which all states are invariant by construction). There are also the right-moving analogs
$\bar e_i$, where the phase depends on $\bar m_i$ instead of $m_i$. In addition we have the
quantum symmetry $\Q$ which acts by multiplication by $e^{\frac{2\pi i n}{3}}$ on the $n^{\rm th}$ twisted
sector $\Hh^{(n)}$. This gives an additional $\ZZ_3$. Note that in the $n^{\rm th}$ twisted sector
$\bar m_i=m_i-2n$, and thus
\be
\bar e_i=\Q e_i\ ,
\ee
implying that the transformations $\Q,e_i,\bar e_i$ form a group $\ZZ_3^6$.
Finally, there are permutations of the six factors of $(1)^6$, where
we define the action of $\pi\in S_6$ by 
\be\label{Jpm}
 \pi \bigl(\otimes_{i=1}^6 \Phi^l_{m_i+n,0;m_i-n,0}\bigr)
 ={\rm sgn}(\pi)^{n} \bigl(\otimes_{i=1}^6
 \Phi^l_{m_{\pi(i)}+n,0;m_{\pi(i)}-n,0}\bigr)\ ,
\ee
with additional signs for $s_i,\bar s_i\neq 0$, given by the usual bosonic/fermionic
statistics.

Thus, the group of symmetries preserving the $\N=2$ superconformal symmetry is $\ZZ_3\times \ZZ_3^5.S_6$.
The subgroup preserving the $\N=4$ superconformal algebra is then generated by the transformations
that leave the currents $J^+$ and $J^-$ of the $\N=4$ superconformal algebra
\be
(\Phi^0_{-2,2;0,0})^{\otimes 6} \in \Hh^{(- 1)} \ ,\qquad (\Phi^0_{2,2;0,0})^{\otimes 6} \in \Hh^{(+1)}
\ee
invariant. This group has the structure $\ZZ_3^4\rtimes A_6$ and is generated by\footnote{We have
also checked that these symmetries preserve the spectral flow operators.}
\begin{list}{{\rm (\arabic{enumi})}}{\usecounter{enumi}}
\item The phase transformations
\be
\prod_{i=1}^6 e_i^{n_i} \qquad \text{with}\quad \sum_{i=1}^6 n_i\equiv 0\mod 3 \ ,
\ee
where the constraint assures invariance of the currents. Because of the orbifold invariance
relation $\prod_{i=1}^6 e_i=1$, they generate the normal subgroup $\ZZ_3^4$.
\item The even permutations, since the odd permutations act on the states (\ref{Jpm}) with a minus sign.
The even permutations form the alternating group $A_6$. 
\end{list}
The resulting group is the semidirect product $\ZZ_3^4\rtimes A_6$, where
we have the obvious action $\pi(\prod_{i=1}^6 e_i^{n_i})=\prod_{i=1}^6 e_{\pi(i)}^{n_i}$ of $A_6$ on 
the generators of $\ZZ_3^4$. The $(1)^6$ Gepner model is therefore an example of case (iii) of
the Theorem.
\medskip

With this description it is now straightforward to calculate the corresponding twining genera, and 
for the convenience of the reader we have collected them in Table~\ref{Tab:TwiningGenera16}. 
For example, we have for the phase transformation $e_1e_2e_3$ 
\begin{align}
\phi_{e_1e_2e_3}(\tau,z)&=2\sum_{m=-1,1,3}I_{m}^6-(2+9e^{2\pi i/3}+9e^{-2\pi i/3})
\sum_{m=-1,1,3}I_m^3I_{m+2}^3\\
&= 2\sum_{m=-1,1,3}I_{m}^6+7\sum_{m=-1,1,3}I_m^3I_{m+2}^3
=-\tfrac{1}{2}\phi_{{\rm 1A}}+\tfrac{3}{2}\phi_{{\rm 3A}} \equiv \hat\phi_{\rm 3a}\ ,
\end{align}
which corresponds to the fourth line of the table. The other twining genera that do 

\begin{table}[htb]
\begin{center}
\begin{tabular}{|c|l||c|c|c|c|}\hline
\multicolumn{2}{|c||}{\textbf{symmetry}} & \multicolumn{4}{|c|}{\textbf{properties}}\\\hline\hline
perm. & conditions on phases & \textbf{$N$} & \textbf{\#} & $\text{Tr}_{\mathbf{24}}$ & $\phi_g$ \\\hline\hline
$ijklmn$ & \parbox{7cm}{\vspace{0.2cm}$n_i=n_j=n_k=n_l=m_m=n_n=0$\vspace{0.1cm}} 
&  $1$  & $1$ & $24$& $\phi_{\rm 1A}$ \\\hline
\parbox{1.95cm}{\vspace{0.1cm}$(ij)(kl)mn$\vspace{0.1cm}} 
& \parbox{7cm}{\vspace{0.2cm}$n_i+n_j=n_k+n_l=n_m=n_n=0$ \vspace{0.1cm}} 
& \parbox{0.25cm}{$2$} & \parbox{0.65cm}{$405$} & \parbox{0.25cm}{$8$}& $\phi_{\rm 2A}$ \\\hline
\parbox{1.95cm}{\vspace{0.1cm}$\phantom{()}ijklmn$\\ $\phantom{)}(ijk)lmn$\\ $(ijk)(lmn)$\vspace{0.1cm}} 
& \parbox{7cm}{\vspace{0.2cm}$n_i=2\,,n_j=1\,,n_k+n_l+n_m+n_m=0$ \\
 $n_i+n_j+n_k=0\,,n_l+n_m+n_n=0$ \\ 
$n_i+n_j+n_k=0\,,n_l+n_m+n_n=0$\vspace{0.1cm}} & \parbox{0.25cm}{$3$} 
 & $2220$ & \parbox{0.25cm}{$6$}& $\phi_{\rm 3A}$ \\\hline
$ijklmn$ & \parbox{7cm}{\vspace{0.2cm}$n_i=n_j=n_k=1\,,n_l=n_m=n_n=0$\vspace{0.1cm}} 
 & \parbox{0.25cm}{$3$} & $20$ &$-3$& $\hat\phi_{\rm 3a}$ \\\hline
$(ijkl)(mn)$ & \parbox{7cm}{\vspace{0.2cm}$n_i+n_j+n_k+n_l+n_m+n_n=0$\vspace{0.1cm}} 
 & \parbox{0.25cm}{$4$} & $7290$ & \parbox{0.25cm}{$4$}& $\phi_{\rm 4B}$ \\\hline
$(ijklm)n$ & \parbox{7cm}{\vspace{0.2cm}$n_i+n_j+n_k+n_l+n_m+n_n=0$\vspace{0.1cm}} 
 & \parbox{0.25cm}{$5$} & $11664$ & \parbox{0.25cm}{$4$}& $\phi_{\rm 5A}$ \\\hline
\parbox{2cm}{\vspace{0.1cm}$(ij)(kl)mn$\\ $(ij)(kl)mn$\\ $(ij)(kl)mn$\\$(ij)(kl)mn$\\ $(ij)(kl)mn$\vspace{0.1cm}} 
 & \parbox{7cm}{\vspace{0.2cm}$n_i+n_j=n_k+n_l=0\,,n_m=1\,,n_n=2$ \\$n_i+n_j=n_k+n_l=1\,,n_m=1\,,n_n=0$ \\ 
 $n_i+n_j=n_k+n_l=2\,,n_m=2\,,n_n=0$\\$n_i+n_j=2\,,n_k+n_l=0\,,n_m+n_n=1$\\
 $n_i+n_j=1\,,n_k+n_l=0\,,n_m+n_n=2$\vspace{0.1cm}} & \parbox{0.25cm}{$6$} & $1620$ 
 & \parbox{0.25cm}{$2$}& $\phi_{\rm 6A}$ \\\hline
\parbox{2cm}{\vspace{0.1cm}$(ij)(kl)mn$\vspace{0.1cm}} 
 & \parbox{7cm}{\vspace{0.2cm}$n_i+n_j=2\,,n_k+n_l=0\,,n_m=1\,,n_n=0$\vspace{0.1cm}} 
 & \parbox{0.25cm}{$6$} & \parbox{0.8cm}{$1620$} & \parbox{0.25cm}{$5$}
 & $\hat\phi_{\rm 6a}$ \\\hline
 \parbox{2cm}{\vspace{0.1cm}$(ijk)(lmn)$\vspace{0.1cm}} 
 & \parbox{7cm}{\vspace{0.2cm}$n_i+n_j+n_k=2\,,n_l+n_m+n_n=1$\vspace{0.1cm}} 
 & \parbox{0.25cm}{$9$} & \parbox{0.8cm}{$3240$} & \parbox{0.25cm}{$3$}& $\hat\phi_{\rm 9a}$ \\\hline
\parbox{1.6cm}{\vspace{0.1cm}$(ijk)lmn$\\ $(ijk)lmn$\vspace{0.1cm}} 
 & \parbox{7cm}{\vspace{0.2cm}$n_i+n_j+n_k=2\,,n_l+n_m+n_n=1$ \\ 
 $n_i+n_j+n_k=1\,,n_l+n_m+n_n=2$\vspace{0.1cm}} & \parbox{0.25cm}{$9$} & $1080$ 
 & \parbox{0.25cm}{$3$}& $\hat\phi_{\rm 9b}$ \\\hline

\end{tabular}
\caption{Twining genera of the $(1)^6$ model. The symmetry generators have been labelled by the 
 structure of the permutations $\{i,j,k,l,m,n\}$ of the minimal models and the individual phase shifts $e_i^{n_i}e_j^{n_j}e_k^{n_k}e_l^{n_l}e_m^{n_m}e_n^{n_n}$. The multiplicity (labelled by $\#$) 
 is the number of `independent' generators within each class of symmetries which are not identified 
 through the action of the orbifold. The order of each generator is denoted by $N$ and 
 $\text{Tr}_{24}$ gives the trace over the $24$-dimensional representation. Finally, 
 the twining genera $\hat\phi_{\rm 3a}$,  $\hat\phi_{\rm 6a}$ and $\hat\phi_{\rm 9ab}$ 
 are not Mathieu twining genera and  are defined in the main body of the text.} 
\label{Tab:TwiningGenera16}
\end{center}
\end{table}
 
\noindent not directly agree with $\MM_{24}$ twining genera --- since we are in case (iii) of the 
Theorem, there is no reason to expect any $\MM_{24}$ twining genera --- are
\begin{align}
\hat\phi_{\rm 6a}(\tau,z)&=\tfrac{1}{2}\left(\phi_{\rm 2A}(\tau,z)+\phi_{\rm 6A}(\tau,z)\right) \\
\hat\phi_{\rm 9a}(\tau,z)& =\tfrac{1}{2}\left(\phi_{\rm 3A}(\tau,z)+\phi_{\rm 3B}(\tau,z)\right) \\
\hat\phi_{\rm 9b}(\tau,z)&=\tfrac{1}{4}\left[\phi_{0,1}(\tau,z)+6\bigl(2\psi^{(3)}(\tau)+3\psi^{(9)}(\tau)
+3E_2^{(9)}(\tau)\bigr)\phi_{-2,1}(\tau,z)\right] \ ,
\end{align}
where $\phi_{0,1}$ and $\phi_{-2,1}$ are the standard Jacobi forms of ${\rm SL}(2,\mathbb{Z})$
of index one and weight 0 and $-2$, respectively, and $\{\psi^{(3)},\psi^{(9)},E_2^{(9)}\}$ form a basis of 
weight $2$ modular forms of $\Gamma_0(9)$ (for more information and explicit definitions see 
\cite{Gaberdiel:2010ca}).

\subsection{The D-brane Charge Lattice}

As before, we can also determine the symmetry group of this model from an analysis of
the D-brane charge lattice. The construction of Gepner model D-branes is standard,
and is briefly sketched in Appendix~\ref{s:Gepner}. The
tensor product A-type branes with $L_i=S_i=0$ and 
$M_1,\ldots,M_6\in \ZZ_6$ generate a charge lattice of rank $22$ with signature $(2,20)$. This is 
as expected, since the A-type tensor product branes only couple to the $22$ RR ground states 
in the untwisted sector.

The other charges are carried by B-type permutation branes of the type described in 
(\ref{16Bper}). As we vary
$M_1,\ldots,M_6,\hat M\in\ZZ_6$, the intersection form of the B-type branes
gives a matrix of rank $10$ with signature $(2,8)$. Again, this is what we expect
since  these permutation branes couple to the $2$ RR ground states in the twisted sectors
$n=\pm 1$, and to the $8$ RR ground states in the untwisted sector with 
$m_1=-m_2$, $m_3=-m_4$, and $m_5=-m_6$.

In order to obtain the full charge lattice we have to combine these two constructions; for example, 
a set of $22$ A-type D-branes with $L_i=0=S_i$, $i=1,\ldots,6$ and suitable values for $M_1,\ldots,M_6$,
and two B-type D-branes with $L_i=M_i=0$, $S_i=0$ and $\hat M=\pm2$ generate the full 
unimodular lattice $\Gamma^{4,20}$ (see the \LaTeX\ source code for details).

Next, we denote that four RR ground states in the $({\bf 2},{\bf 2})$ representation of 
${\rm SU}(2)_L\times {\rm SU}(2)_R$ according to their $J_0^3,\tilde{J}_0^3$ charges as 
$\Phi_{1,\bar 1}$, $\Phi_{1,-\bar 1}$, $\Phi_{-1,\bar 1}$, 
and $\Phi_{-1,-\bar 1}$,  Let us consider the sublattice $(\Gamma^{4,20})^\perp$ of D-branes that are 
neutral under these four states. The
$22$ A-type branes generate the sublattice of D-branes that are neutral with respect to
$\Phi_{1,-\bar 1}$ and  $\Phi_{-1,\bar 1}$, while their charge with 
respect to $\Phi_{1,\bar 1}$ and $\Phi_{-1,-\bar 1}$ is given by
\begin{align} Q_{\Phi_{1,\bar 1}}(\|0,M_i,0\rrangle_{\rm A})&= \frac{1}{3}e^{\frac{\pi i}{3}\sum_i M_i}\ ,\\
Q_{\Phi_{-1,-\bar 1}}(\|0,M_i,0\rrangle_{\rm A})&= \frac{1}{3}e^{-\frac{\pi i}{3}\sum_i M_i}\ .
\end{align}
With the redefinition
\be
\Phi^a=3 (\Phi_{1,\bar 1}+\Phi_{-1,-\bar 1})\ ,\qquad \Phi^b=2 \sqrt{3} i(\Phi_{1,\bar 1}-\Phi_{-1,-\bar 1})\ ,
\ee
the states $\Phi^a,\Phi^b$ correspond to elements of the dual lattice
$(\Gamma^{4,20})^*\cong \Gamma^{4,20}$ so that the sublattice generated by A-type
D-branes and orthogonal to these elements has maximal rank,
\be
\mathrm{rk}(\Gamma^{4,20})^\perp=20\ .
\ee
The discriminant group of this $20$-dimensional lattice is $\ZZ_9\times\ZZ_3^2$, and its discriminant form is the 
same as the one of the S-lattice $2^93^6$ (see Appendix~\ref{s:proofs}), whose quadratic form is
\be\label{Q2923} Q_{2^{9}3^{6}}=\begin{pmatrix}
 \; \; 4 & \; \; 1 & \; \; 1 & -2 \\
\; \;  1 & \; \; 4 & \; \; 1 & -2 \\
 \; \; 1 & \; \;  1 & \; \; 4 & \; \;  1 \\
 -2 & -2 & \; \;  1 & \; \;  4
\end{pmatrix}\ .
\ee
By the general lattice gluing procedure \cite{Conway}, it follows that the S-lattice $2^93^6$ 
and $(\Gamma^{4,20})^\perp(-1)$ are orthogonal sublattices of a positive definite even 
unimodular lattice of rank $24$. This procedure also provides explicitly the quadratic form for this 
unimodular lattice. With the help of some computer algorithm we have  shown that the 
resulting lattice has no vectors of norm $2$, thus proving that it is indeed the Leech lattice 
$\Lambda$. All  sublattices of $\Lambda$ with quadratic form \eqref{Q2923} are related to the 
S-lattice $2^93^6$ by some Leech lattice automorphism, and the pointwise stabiliser of each of them 
is the group $G=\ZZ_3^4\rtimes A_6$ \cite{Curtis}, thus matching the results from the previous 
subsection. 

\section{The $(2)^4$ Model}\label{s:24}

Our last example is the `quartic' model $(2)^4$ which is constructed by taking a 
$\mathbb{Z}_4$ orbifold of four tensor powers of the $\N=2$ minimal model of level 
$k=2$. Our notation and conventions are the same as in the previous section and are again 
summarised in Appendix~\ref{s:Gepner}. 

\subsection{The Spectrum}

For the $\mathbb{Z}_4$ orbifold we have in addition to the  
untwisted sector $\mathcal{H}^{(0)}$ three twisted sectors ($n=1,2,3$), with spectrum 
\be \Hh^{(n)}=\bigotimes_{i=1}^4\Hh_{l_i,m_i+n,s_i}\otimes\bar\Hh_{l_i,m_i-n,\bar s_i}\ ,
\ee
where invariance under the $\mathbb{Z}_4$-orbifold  enforces $\sum_{i=1}^4 m_i\equiv 0\mod 4$. 
At $k=2$ there are six $\N=2$ R-sector representations, which we may label 
by $(l=0, m=\pm 1, \pm 3)$ and $(l=1,m=0,2)$. For the elliptic genus we are again only interested in those 
RR states for which the right-moving states are ground states with $\bar h=\frac{1}{4}$. The relevant 
coset representations are 
(compare also~\cite{Brunner:2006tc,Brunner:2009mn})
\begin{align}
&\hspace{1.2cm}(0,1,1)^{\otimes 4}\,,\hspace{1.5cm} (1,2,1)^{\otimes 4}\,,\hspace{1.5cm} (0,-1,-1)^{\otimes 4}\,,\\
&(0,1,1)^{\otimes 2}\otimes (0,-1,-1)^{\otimes 2}\,,\hspace{1cm} (0,1,1)\otimes (0,-1,-1)\otimes (1,2,1)^{\otimes 2}\,,
\end{align}
where in the second line all $6$ and $12$ permutations are included, respectively. Explicitly,
the RR states that can contribute to the elliptic genus are thus of the form 
\begin{equation}
\begin{array}{llll}
n=0: & n=1: & n=2: & n=3: \\
\otimes_i^4 \Phi^0_{1,s_i;1,1} \qquad &  \otimes_i^4 \Phi^0_{3,s_i;1,1} \qquad & 
\otimes_i^4 \Phi^0_{-3,s_i;1,1}  \qquad &  \otimes_i^4 \Phi^0_{-1,s_i;1,1}  \\
\otimes_i^4 \Phi^1_{2,s_i;2,1} \qquad &  \otimes_i^4 \Phi^1_{0,s_i;2,1} \qquad &
\otimes_i^4 \Phi^1_{-2,s_i;2,1} \qquad &  \otimes_i^4 \Phi^1_{0,s_i;2,1}  \\
\otimes_i^4 \Phi^0_{-1,s_i;-1,-1}  \qquad &  \otimes_i^4 \Phi^0_{1,s_i;-1,-1} \qquad &
\otimes_i^4 \Phi^0_{3,s_i;-1,-1}  \qquad & \otimes_i^4 \Phi^0_{-3,s_i;-1,-1} \ .
\end{array}
\end{equation}
In addition, there are the states (again written in the order $n=0$, $n=1$, $n=2$, and $n=3$)
\begin{equation}
\begin{array}{ll}
\bigl(\otimes_i^2 \Phi^0_{1,s_i;1,1}\bigr)\otimes\bigl(\otimes_i^2 \Phi^0_{-1,s_i;-1,-1}\bigr) \qquad &
\bigl(\otimes_i^2 \Phi^0_{3,s_i;1,1}\bigr)\otimes\bigl(\otimes_i^2 \Phi^0_{1,s_i;-1,-1}\bigr) \\
 \bigl(\otimes_i^2 \Phi^0_{-3,s_i;1,1}\bigr)\otimes\bigl(\otimes_i^2 \Phi^0_{3,s_i;-1,-1}\bigr) \qquad &
\bigl(\otimes_i^2 \Phi^0_{-1,s_i;1,1}\bigr)\otimes\bigl(\otimes_i^2 \Phi^0_{-3,s_i;-1,-1}\bigr)\ ,
\end{array}
\end{equation}
where in each case there are 6 different permutations,  as well as the states
\begin{equation}
\begin{array}{ll}
\bigl(\otimes_i^2 \Phi^1_{2,s_i;2,1}\bigr)\otimes \Phi^0_{-1,s_i;-1,-1}\otimes \Phi^0_{1,s_i;1,1} \quad &
\bigl(\otimes_i^2 \Phi^1_{0,s_i;2,1}\bigr)\otimes \Phi^0_{1,s_i;-1,-1}\otimes \Phi^0_{3,s_i;1,1} \\
 \bigl(\otimes_i^2 \Phi^1_{-2,s_i;2,1}\bigr)\otimes \Phi^0_{3,s_i;-1,-1}\otimes \Phi^0_{-3,s_i;1,1} \quad &
 \bigl(\otimes_i^2 \Phi^1_{0,s_i;2,1}\bigr)\otimes \Phi^0_{-3,s_i;-1,-1}\otimes \Phi^0_{-1,s_i;1,1} \ ,
 \end{array}
 \end{equation}
where now there are 12 different permutations each. Again, the states with $s_i=1,3$ are mapped into
one another under the action of the  $\N=2$ algebra of the $i^{\rm th}$ model. Since
\be
I^0_{\pm 1}(\bar\tau,0)=\pm 1 \ , \quad  \quad I^0_{\pm 3}(\bar\tau,0)=0 \ , \quad
I^1_{0}(\bar\tau,0)=0 \ , \quad I^1_{\pm 2}(\bar\tau,0) =\pm 1  
\ee
the total contribution to the elliptic genus is then
\be
\phi = \sum_{\substack{m\in\ZZ/8\ZZ\\m\text{ odd}}}
\Bigl[ 2 (I^0_m)^4 + 6 (I^0_m)^2 (I^0_{m-2})^2 
- 12 I^0_m I^0_{m+2}  (I^1_{m+3})^2 +(I^1_{m+1})^4\Bigr] \ ,
\ee
which agrees  indeed with (\ref{K3eg}).

\subsection{Symmetries and Twining Genera}

In the quartic $(2)^4$ model the currents $J^\pm$ of the left- and right moving $\mathcal{N}=4$ 
superconformal algebra  are given by
\begin{equation}
J^\pm= (0,\pm2,2)^{\otimes4}\otimes \overline{(0,0,0)}^{\otimes 4} \ , \qquad \hbox{and} \qquad
\bar{J}^\pm=(0,0,0)^{\otimes4}\otimes\overline{(0,\pm2,2)}^{\otimes 4}\ .\label{N4current24}
\end{equation}
The symmetries that leave these currents invariant are
\begin{list}{{\rm (\arabic{enumi})}}{\usecounter{enumi}}
\item Phase shifts, which are generated by 
\be
\prod_{i=1}^{4} e_i^{a_i} \, \bigl( \Q^2 (-1)^{F_s} \bigr)^{\frac{A}{2}} \ , \qquad
A = \sum_{i=1}^{4} a_i \qquad \hbox{with} \qquad
A \equiv 0 \,\,\, \hbox{mod} \, 2 \ .
\ee
Here each $e_{i}^{a_i}$ acts as 
\be
e_{i}^{a_i}:\,\Phi^\ell_{m_i,s_i;\bar{m}_i,\bar{s}_i}\longmapsto 
e^{\frac{2\pi i m_i a_i}{4}}\Phi^\ell_{m_i,s_i;\bar{m}_i,\bar{s}_i}\  ,
\ee
$\Q$ is the quantum symmetry of the Gepner orbifold (that acts as a phase 
$e^{\frac{\pi i n}{2}}$ on the states of the $n^{\text{th}}$ twisted sector), and 
$(-1)^{F_s}$ is the left-moving spacetime fermion number operator that acts
as $+1$ ($-1$) on the left-moving NS (R) sector. (The inclusion of $(-1)^{F_s}$
is required in order to preserve the spectral flow operators.)
Taking into account the overall $\mathbb{Z}_4$ invariance coming from the Gepner orbifold,
these phase shifts generate the group $\mathbb{Z}_4^2\times \ZZ_2$.
\item The permutations 
\be\label{Jpm4}
 \pi \bigl(\otimes_{i=1}^4 \Phi^l_{m_i+n,0;m_i-n,0}\bigr)={\rm sgn}(\pi)^{n} 
 \Q^{ {\rm sgn}(\pi)-1}  \bigl(\otimes_{i=1}^4
 \Phi^l_{m_{\pi(i)}+n,0;m_{\pi(i)}-n,0}\bigr)\ .
\ee
They generate the symmetric group $S_4$.
\end{list}
These symmetries generate the group $(\ZZ_2\times \ZZ_4^2)\rtimes S_4$, thus realising 
case (i) of the Theorem with $G'=\ZZ_2^3$ and $G''=\ZZ_2^2.S_4$.  Here 
$G'\subset \ZZ_2^{11}$ is generated by the phases $e_1^2e_2^2$,    
$e_2^2e_3^2$   and   $e_3^2\Q^2(-1)^{F_s}$, while 
$G''\subset \MM_{24}$ is generated by the permutations, giving the 
$S_4$ factor, as well as by the phases $e_1e_2\Q^2(-1)^{F_s}$ and $e_2^3e_3^3\Q^2(-1)^{F_s}$, giving the 
$\ZZ_2^2$ factor.\footnote{Note that these last two phases are order $2$ only after taking the quotient by 
$G'$; in fact, their squares are non-trivial elements in $G'$, so that as elements of the whole group 
$G$ they are order 4.}

We can also calculate the associated twining genera,
and our results are collected in Table~\ref{Tab:TwiningGenera}. Again, we see that some of the generators in
$G'$ lead to the  twining genus 
\be
\hat{\phi}_{4a} =-\frac{1}{2}\left(\phi_{\rm 1A}-\phi_{\rm 2A}-2\phi_{\rm 4B}\right) \ ,
\ee
that does not coincide with any twining genus of $\MM_{24}$. 

\begin{table}[htbp]
\begin{center}
\begin{tabular}{|c|l|c|c||c|c|c|c|}\hline
\multicolumn{4}{|c||}{\textbf{symmetry}} & \multicolumn{4}{|c|}{\textbf{properties}}\\\hline\hline
perm. & conditions on phases & $n_Q$ & $n_F$ & \textbf{$N$} & \textbf{\#} & 
$\text{Tr}_{\mathbf{24}}$ & $\phi_g$  \\\hline\hline
$ijkl$ & \parbox{4.7cm}{\vspace{0.2cm}$n_i=n_j=n_k=n_l=0$\vspace{0.2cm}} &  
$0$ & $0$ & $1$  & $1$ & $24$& $\phi_{\rm 1A}$ \\\hline
\parbox{1.3cm}{\vspace{0.1cm}$\phantom{()}ijkl$\\ $(ij)(kl)$\\$\phantom{(}(ij)kl$\\
$\phantom{(}(ij)kl$\\$\phantom{()}ijkl$\vspace{0.1cm}} & 
\parbox{7cm}{\vspace{0.1cm}$n_i=n_j=2\,,n_k+n_l=0$\\$n_i+n_j=n_k+n_l\in\{0,2\}$\\
$n_i+n_j=n_k+n_l=0\,,n_{k,l}\in\{0,2\}$\\$n_i+n_j=0\,,n_k+n_l=2\,,n_{k,l}\in\{0,2\}$\\
$n_i+n_j+n_k+n_l=2\,,n_{i,j,k,l}\in\{0,2\}$\vspace{0.1cm}} & \parbox{0.25cm}{$0$\\$0$\\$2$\\$0$\\$2$} 
& \parbox{0.25cm}{$0$\\$0$\\$0$\\$1$\\$1$} & \parbox{0.25cm}{$2$} & $79$ & \parbox{0.25cm}{$8$}& 
$\phi_{\rm 2A}$ \\\hline 
$(ij)(kl)$ & \parbox{4.7cm}{\vspace{0.1cm}$n_i+n_j=n_k+n_l\in\{1,3\}$\vspace{0.1cm}} 
&  $2$ & $1$ & $2$  & $24$ & $0$& $\phi_{\rm 2B}$ \\\hline
$(ijk)l$ & \parbox{5cm}{\vspace{0.1cm}$n_i+n_j+n_k+n_l=0$\vspace{0.1cm}} 
& $0$ & $0$ & $3$  & $128$ & $6$& $\phi_{\rm 3A}$ \\\hline
$ijkl$ & \parbox{5.5cm}{\vspace{0.1cm}$n_i=n_j\in\{1,3\}\,,n_k=n_l=0$\vspace{0.1cm}} 
& $2$ & $1$ & $4$  & $6$ & $-4$& $\hat{\phi}_{4a}$ \\\hline
\parbox{1.3cm}{\vspace{0.1cm}$\phantom{()}ijkl$\\ $\phantom{(}(ijkl)$\\ 
$(ij)(kl)$\\$\phantom{(}(ij)kl$\\$\phantom{(}(ij)kl$\\$\phantom{(}(ij)kl$\\$\phantom{(}(ij)kl$\\
$\phantom{()}ijkl$\vspace{0.1cm}} & \parbox{7cm}{\vspace{0.1cm}$n_i=1\,,n_j+n_k+n_l=3$\\
$n_i+n_j+n_k+n_l=0$\\$n_i=n_j=1\,,n_k+n_l=2$\\$n_i+n_j+n_k+n_l=2\,,n_{i,j}\in\{1,3\}$\\
$n_i+n_j+n_k+n_l=2\,,n_{k,l}\in\{1,3\}$\\$n_i+n_j=n_k+n_l=0\,,n_{i,j}\in\{1,3\}$\\
$n_i+n_j=n_k+n_l=0\,,n_{k,l}\in\{1,3\}$\\$n_i=1\,,n_j+n_k+n_l=1$\vspace{0.1cm}} & 
\parbox{0.25cm}{$0$\\$2$\\$0$\\$0$\\$0$\\$2$\\$2$\\$2$} & 
\parbox{0.25cm}{$0$\\$0$\\$0$\\$1$\\$1$\\$0$\\$0$\\$1$} & \parbox{0.25cm}{$4$} & $306$ 
& \parbox{0.25cm}{$4$}& $\phi_{\rm 4B}$ \\\hline 
$(ijk)l$ & \parbox{5cm}{\vspace{0.1cm}$n_i+n_j+n_k+n_l=2$\vspace{0.1cm}} & $2$ & $1$ & $6$  & $128$ & 
$2$& $\phi_{\rm 6A}$ \\\hline
$(ij)kl$ & \parbox{5cm}{\vspace{0.1cm}$n_i+n_j=1\,,n_k+n_l=3$\vspace{0.1cm}} & 
$2$ & $0$ & $8$  & $96$ & $2$& $\phi_{\rm 8A}$ \\\hline
\end{tabular}
\caption{The twining genera of the $(2)^4$ model. 
Here the symmetries have been labelled by the structure of the 
permutations of $\{i,j,k,l\}$, the phase shifts of the individual symmetries 
$e_i^{n_i}e_j^{n_j}e_k^{n_k}e_l^{n_l}$, the power of the operator 
$Q^{n_Q}$ and the spacetime fermion number $(-1)^{n_F F_s}$. The remaining part of the 
notation is the same as for Table~1.}
\label{Tab:TwiningGenera}
\end{center}
\end{table}

\subsection{The D-brane Charge Lattice}

The derivation of a set of D-branes generating the lattice of RR charges is analogous to the 
construction for the $(1)^6$ model. The A-type tensor product branes (see Appendix~\ref{s:Dbranes})
are now only charged under the $21$ RR ground states in the 
untwisted sector. The remaining charges can be accounted for in terms of B-type permutation
branes. Taking $21$ A-type branes with $L_i=S_i=0$ and suitable combinations for $M_i$,
as well as $3$ B-type permutation branes with $L_i=M_i=S_i=0$ and suitable values
of $\hat{M}$ leads indeed to the full charge lattice $\Gamma^{4,20}$,
{\it i.e.}\ the resulting intersection form has determinant one.\footnote{See the \LaTeX\ source code 
for details; there we also give further details about the sublattice $(\Gamma^{4,20})^\perp$ and its 
embedding into the Leech lattice.}

The sublattice $(\Gamma^{4,20})^\perp$ of D-branes that are neutral with respect to the RR ground
states in the $({\bf 2},{\bf 2})$ representation of ${\rm SU}(2)_L\times {\rm SU}(2)_R$ has maximal rank $20$. 
Upon changing  the sign of its quadratic form, it can be embedded into the Leech lattice $\Lambda$, and 
its orthogonal complement $\Lambda^G$ has quadratic form
\be \begin{pmatrix}
 6 & \;\; 2 & \;\; 0 & \;\; 0 \\
 2 & \;\; 4 & -2 & \;\; 4 \\
 0 & -2 & \;\; 6 & -4 \\
 0 & \;\; 4 & -4 & \;\; 8
\end{pmatrix}\ .
\ee Since $\Lambda^G$ contains a vector of norm $8$, its point-wise stabiliser must be a subgroup of 
$\ZZ_2^{12}\rtimes\MM_{24}$ (see Appendix~\ref{s:proofs}). 
More precisely, we have shown that the stabiliser turns out to be isomorphic to 
$G=(\ZZ_2\times\ZZ_4^2)\rtimes S_4$, which is the group of symmetries we have found in the 
previous  subsection.

\section{Conclusions}

In this paper, we have shown that the symmetries of  a
non-linear $\sigma$-model on K3 that preserve the $\N=(4,4)$-superconformal
algebra as well as the spectral flow operators,  form a subgroup of the Conway group $Co_1$. This
provides a stringy analogue of the Mukai theorem in algebraic geometry that shows that the 
symplectic automorphisms of any K3 form a subgroup of the Mathieu group $\MM_{23}$. The specific
subgroups that can actually arise in our case are spelled out in the Theorem stated in the Introduction.

Our result is somewhat unexpected in view of the recent observation of \cite{EOT}, 
relating the elliptic genus of K3 to the Mathieu group $\MM_{24}$. In particular, it follows
from our Theorem (as well as the explicit examples) that the symmetries of a given K3 model are
not, in general, subgroups of $\MM_{24}$.\footnote{Apparently this was also independently noted
by the authors of \cite{EOT}; we thank Yuji Tachikawa for discussions about this point.} As a consequence,
their twining genera do not, in general, agree with those appearing in the context of  Mathieu moonshine 
\cite{Cheng:2010pq,Gaberdiel:2010ch,Gaberdiel:2010ca,Eguchi:2010fg}, and we have seen
explicit examples of this. In particular, this therefore means that the naive idea that  $\MM_{24}$ arises
as the `union' of all symmetries from different points in moduli space needs to be refined. 

At least on the face of it, our Theorem seems to suggest that the elliptic genus of K3 could exhibit 
some sort of moonshine based on $Co_1$ or $Co_2$, but we have seen no evidence of this since
the dimensions of their representations do not match the coefficients of the elliptic genus. It is intriguing
that a connection between $Co_1$ and the BKM algebras arising in $\TT^6$-compactifications of
the heterotic string has  recently been observed in \cite{Govindarajan:2011mp}; given that the heterotic
string on $\TT^6$ is dual to type IIA on $K3\times \TT^2$ this could be related to our findings.

Our analysis also provides useful tools for the general understanding of non-linear 
$\sigma$-models on K3. For example, our Theorem suggests the existence of models 
with some large symmetry groups, and gives precise predictions for their lattice of 
D-brane charges. These predictions were nicely verified in the three examples we considered. 
In particular, we showed  that the $(1)^6$ Gepner model realises case (iii) of the Theorem. 
Some preliminary investigations suggest that the groups described in case (iv) might be realised in 
terms of $\TT^4/\ZZ_3$ torus orbifolds for different choices of metric and B-field, while it is more difficult 
to guess which model realises case (ii). For case (i), the Theorem predicts the existence of a 
model with symmetry group $\ZZ_2^8\rtimes\MM_{20}$, that might correspond to a 
certain $\TT^4/\ZZ_2$ orbifold. 

On more general grounds, the sublattice of D-branes that are neutral under the RR ground 
states  in the $({\bf 2},{\bf 2})$ representation of SU$(2)_L\times {\rm SU}(2)_R$ is,  
in a certain sense, the stringy analogue of the Picard lattice in algebraic geometry. Since the groups 
of symmetries have a genuine action on this sublattice, it would be interesting to understand 
for which models this lattice has maximal rank $20$. For example, one can show that for a Gepner model of 
type $(k_1)\cdots (k_r)$, a necessary condition for this  to happen is that the greatest common divisor 
$\gcd(k_1+2,k_2+2,\ldots)$ of their shifted levels is $3$, $4$ or $6$.

\section*{Acknowledgements} 
We thank Michael Douglas, Martin Fluder, Daniel Persson and Yuji Tachikawa for useful discussions and 
correspondences. The research of MRG is supported by the Swiss National Science Foundation.

\appendix

\section{Notation and Mathematical Background}\label{s:notation}

\subsection{Group Theory}

Let us give a brief summary of our conventions regarding finite groups. 
\begin{center}
\begin{tabular}{rl}
$A\times B$ & \parbox{12.5cm}{The direct product of the groups $A$ and $B$.}\\[10pt]
$N\rtimes H$ & \parbox{12.5cm}{The semidirect product of $H$ acting on the normal subgroup $N$.}\\[10pt]
\parbox{0.9cm}{$N.Q$ \\ \\ } & \parbox{12.5cm}{A group $G$ having $N$ as a normal subgroup such 
that $G/N\cong Q$. (This notation includes the direct and semidirect product as special subcases.)}\\[20pt]
$\ZZ_n$ & \parbox{12.5cm}{Cyclic group of order $n$.}\\[10pt]
\parbox{0.9cm}{$p^{1+2n}_+$ \\ \\ \\ \\ } & \parbox{12.5cm}{Extra-special group of 
order $p^{1+2n}$ (we will always omit the plus in our notation). For a prime $p$ and 
positive integer $n$, $p^{1+2n}$ is the extension of $\ZZ_p^{2n}$ by a central element 
$z$ of order $p$. It is generated by $2n$ elements $x_1,\ldots,x_n,y_1,\ldots,y_n$ of order $p$, 
with $x_ix_j=x_jx_i$, $y_iy_j=y_jy_i$ and 
$\prod x_i^{r_i}\prod y_j^{s_j}=z^{r\cdot s}\prod y_j^{s_j}\prod x_i^{r_i}$.}\\[40pt]
\end{tabular}
\end{center}
\begin{center}
\begin{tabular}{rl}
$S_n$ & \parbox{12.5cm}{Group of permutations of $n$ elements (symmetric group).}\\[10pt]
$A_n$ & \parbox{12.5cm}{Group of even permutations of $n$ elements (alternating group).}\\[10pt]
\parbox{0.7cm}{$\MM_{24}$ \\ \\ \\} & \parbox{12.5cm}{The largest Mathieu group, a sporadic simple 
group of order $2^{10}\cdot 3^3\cdot 5\cdot 7\cdot11\cdot23=244823040$. It can be described as a 
group of permutations of $24$ elements. More precisely, it is the subgroup of $S_{24}$ that preserves the 
binary Golay code. This group has $26$ conjugacy classes.}\\[35pt]
\parbox{0.7cm}{$\MM_{23}$ \\} & \parbox{12.5cm}{The subgroup of $\MM_{24}$ that, in its representation 
as a permutation of $24$ elements, fixes one element.}\\[20pt]
\parbox{0.7cm}{$Co_0$ \\ \\ } & \parbox{12.5cm}{The Conway group is the automorphism group 
${\rm Aut}(\Lambda)$ of the Leech lattice. It is an extension of $Co_1$ by the central $\ZZ_2$ that flips the 
sign of all vectors in $\Lambda$.}\\[20pt]
\parbox{0.7cm}{$Co_1$ \\ } & \parbox{12.5cm}{A sporadic simple group of order \\
$2^{21}\cdot 3^9\cdot 5^4\cdot7^2\cdot11\cdot13\cdot23=4157776806543360000$.}\\[15pt]
\parbox{1.65cm}{$\ZZ_2^{12}\rtimes \MM_{24}$\\ \\ \\ \\ \\} & \parbox{12.5cm}{A maximal subgroup of 
${\rm Aut}(\Lambda)$. It fixes a set of $24$ mutually orthogonal vectors $x_1,\ldots, x_{24}\in\Lambda$ of 
norm $8$ ($x_i\cdot x_j=8\delta_{ij}$), up to signs. Each element of the normal subgroup $\ZZ_2^{12}$ 
changes the sign of $n$ of these $24$ vectors, with $n=0,8,12,16,24$. This description establishes 
a one to one correspondence of $\ZZ_2^{12}$ with the binary Golay code \cite{Conway}. The subgroup 
$\MM_{24}$ acts by permutations of $x_1,\ldots,x_{24}$.}
\end{tabular}
\end{center}
\medskip

\subsection{Lattices}\label{s:lattices}

For any lattice $L$, we denote by $L^*$ its dual lattice and by $L(n)$, $n\in\RR$, the lattice obtained
 from $L$ by multiplying the quadratic form by $n$. If $L$ is integral, then $L\subseteq L^*$, and the 
 finite abelian group $A_L=L^*/L$ is called the \emph{discriminant group}. We denote by 
 $l(L)$ the minimal number of generators of $A_L$ (notice that $l(L)\le \rk L$, with $\rk L$ the
 rank of the lattice ). 
 
Let $A_L$ be the discriminant group of an integral lattice $L$, and $q_L$ the 
associated \emph{discriminant quadratic form}, {\it  i.e.}\ the form 
\be 
q_L:A_L\to \QQ/2\ZZ
\ee
induced by the quadratic form on $L$. More generally, we denote by $A_q$ a finite abelian 
group with a quadratic form $q:A_q\to \QQ/2\ZZ$. The quadratic form $q$ determines a bilinear 
form on $A_q$ which takes values in $\QQ/\ZZ$; we denote it by $a\cdot b$, where $a,b\in A_q$. 

If $L$ is a sublattice of a unimodular lattice $\Gamma$ and $L^\perp$ is its orthogonal complement in 
$\Gamma$, then there is an isomorphism between the discriminant groups
\be\label{A.2}
\gamma:A_L\stackrel{\cong}{\to} A_{L^\perp}\ ,
\ee 
that flips the sign of the quadratic form
\be\label{A.3} 
q_L=-q_{L^\perp}\circ\gamma\ .
\ee
More precisely, $\bar x\cong \bar y$, with $\bar x\in A_L$ and $\bar y\in A_{L^\perp}$, if and only if
 $x+y\in\Gamma$ for any choice $x\in L^*$, $y\in(L^\perp)^*$ of representatives of $\bar x,\bar y$. 
 Conversely, given two even lattices $L_1$, $L_2$ with isomorphic discriminant groups 
 $\gamma:A_{L_1}\stackrel{\cong}{\to} A_{L_2}$ and  opposite discriminant quadratic forms
 $q_{L_1}=-q_{L_2}\circ\gamma$, one can construct an even unimodular lattice 
 $\Gamma$ by `gluing' $L_1$ and $L_2$, {\it i.e.}\
\be 
\Gamma=\{x\oplus y\in L_1^*\oplus L_2^*\mid \bar x\cong \bar y\}\ ,
\ee 
where $\bar x,\bar y$ are the images in $A_{L_1}$, $A_{L_2}$ of $x\in L_1^*$ and $y\in L_2^*$,
respectively.

A sublattice $L'$ of a lattice $L$ is called \emph{primitive} if $L/L'$ is a free group;
 in other words, $L'=(L'\otimes \QQ)\cap L$. Correspondingly, a primitive embedding of a lattice 
 $L'$ in $L$ is an embedding such that the image is primitive.
 
 \subsection{The Leech Lattice and the Golay Code}
 
The Leech lattice can be defined in terms of the binary Golay code $\C_{24}$, a 
$12$-dimensional subspace of the vector space $\FF_2^{24}$, where $\FF_2=\{0,1\}$ is the field 
with two elements. To each vector $f=(f_1,\ldots,f_{24})$ of $\C_{24}$ ({\it i.e.}\ to each codeword) 
we associate a subset $X_f$ of $\Omega=\{1,\ldots,24\}$, corresponding to the 
non-zero coordinates of $f$, {\it i.e.}\ $X_f=\{i\in\Omega\mid f_i\neq 0\}$. This 
collection of $2^{12}=4096$ subsets of $\Omega$ (called $\C$-sets) includes the empty set, 
$\Omega$ itself, 759 $\C$-sets with $8$ elements (special octads), $2576$ with $12$ elements and 
759 with $16$ elements (the complements in $\Omega$ of the special octads). Furthermore, 
for any choice of $5$ distinct elements in $\Omega$ there is a unique special octad containing them. 
These properties are sufficient to determine the collection of $\C$-sets, and thus the Golay code, 
up to permutations of the objects in $\Omega$. The Mathieu group $\MM_{24}$ can be defined 
as the group of automorphisms of the Golay code, or, equivalently, as the subgroup of $S_{24}$ 
stabilising the collection of $\C$-sets.
\smallskip

An explicit description of the Leech lattice $\Lambda\subset\RR^{24}$ can be given as follows: the
vector $v=\frac{1}{\sqrt{8}}(v_1,\ldots,v_{24})$ is an element of the Leech lattice provided that 
\begin{itemize}
\item[--] the $v_i$, $i=1,\ldots,24$, are all integers of the same parity; 
\item[--] $\sum_{i=1}^{24}v_i\equiv 0$ or $4\mod 8$ according to $v_i\equiv 0$ or $1\mod 2$, respectively; and 
\item[--] for each $\nu\in\{0,1,2,3\}$, the set $\{i\in \Omega\mid v_i\equiv \nu\mod 4\}$ is a $\C$-set.
\end{itemize}

\section{Proof of the Theorem}\label{s:latticeproofs}

In this appendix, we give the remaining details of the arguments of Section~\ref{symmetries}, leading 
to the proof of the Theorem.

\subsection{$G$ as a Subgroup of $O(\Gamma^{25,1})$}\label{s:GinOGamma}

For each lattice $L$ and group $G\subset {\rm Aut}(L)$, we define as in (\ref{LuG})  the 
$G$-invariant lattice $L^G$ as $L^G=\{x\in L\mid g(x)=x,\forall g\in G\}$. Furthermore, $L_G$ is its 
orthogonal complement in $L$, $L_G=\{x\in L\mid x\cdot y=0,\forall y\in L^G\}$. We take 
$L=\Gamma^{4,20}\subset\RR^{4,20}$, and consider the case where  $G$ is the subgroup of 
${\rm Aut}(\Gamma^{4,20})\subset O(4,20,\RR)$, fixing a positive-definite four plane 
$\Pi\subset\RR^{4,20}$.

\begin{proposition}\label{p:lemma} For any choice of a positive 4-plane  $\Pi$, $L_G$ is a negative 
definite lattice of rank  $\rk L_G\le 20$. $G$ acts trivially on $A_{L_G}$,  and 
$l(L_G)\le 24-\rk(L_G)$.
\end{proposition}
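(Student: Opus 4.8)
The plan is to establish the three assertions in sequence, the triviality of the $G$-action on $A_{L_G}$ being the essential point. For negative-definiteness, I would first note that since every $g\in G$ fixes $\Pi$ pointwise we have $\Pi\subset L^G\otimes\RR$. Passing to orthogonal complements inside the nondegenerate $24$-dimensional rational quadratic space $L\otimes\QQ$ (whose real extension is $\RR^{4,20}$), the definition of $L_G$ gives $L_G\otimes\QQ=(L^G\otimes\QQ)^\perp$, whence $L_G\otimes\RR=(L^G\otimes\RR)^\perp\subseteq\Pi^\perp$. As $\Pi$ is a maximal positive-definite subspace (signature $(4,0)$), its complement $\Pi^\perp$ is negative definite of dimension $20$; therefore $L_G$ is negative definite with $\rk L_G\le 20$. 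The same complement identity, together with the fact that $\dim(L^G\otimes\QQ)+\dim(L^G\otimes\QQ)^\perp=24$ in the nondegenerate space $L\otimes\QQ$, yields the rank relation $\rk L^G+\rk L_G=24$, which I will use below. I would also record that $L^G$ is primitive in $L$, being the saturated intersection of the sublattices $\ker(g-1)$.

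The core of the argument is that $G$ acts trivially on $A_{L_G}$, and here unimodularity of $L=\Gamma^{4,20}$ enters through the glue description (A.2)--(A.3). Let $p_+,p_-$ be the orthogonal projections of $L$ onto $L^G\otimes\RR$ and $L_G\otimes\RR$; integrality shows $p_+(L)\subseteq (L^G)^*$ and $p_-(L)\subseteq L_G^*$, and unimodularity makes $L/(L^G\oplus L_G)$ the graph of an isomorphism $\gamma:A_{L_G}\stackrel{\cong}{\to}A_{L^G}$, so that $p_-$ induces a surjection $L\to A_{L_G}$. Fix $g\in G$ and $\bar x\in A_{L_G}$, and pick $w\in L$ with $p_-(w)$ representing $\bar x$. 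Since $g$ is an isometry preserving the orthogonal splitting and acts as the identity on $L^G$, hence on $(L^G)^*$, it commutes with $p_+$ and fixes $p_+(w)$; consequently $p_+(g(w)-w)=0$, so $g(w)-w\in L\cap(L_G\otimes\RR)=L_G$, the last equality by primitivity of $L_G$. Applying $p_-$ gives $g(p_-(w))-p_-(w)\in L_G$, i.e. $g(\bar x)=\bar x$. As $\bar x$ was arbitrary, $G$ acts trivially on $A_{L_G}$.

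Finally, the bound on $l(L_G)$ follows formally: the isomorphism $\gamma$ identifies $A_{L_G}$ with $A_{L^G}$ as abelian groups, so $l(L_G)=l(L^G)$, and combining the general inequality $l(L^G)\le\rk L^G$ with $\rk L^G=24-\rk L_G$ gives $l(L_G)\le 24-\rk L_G$. I expect the second paragraph to be the only genuine obstacle: everything else is signature bookkeeping and the standard $l\le\rk$ bound, whereas the triviality of the action on the discriminant group is precisely where unimodularity and the primitivity of $L_G$ must be used in tandem — one must check both that $g(w)-w$ has vanishing $L^G$-component and that primitivity upgrades membership in $L_G\otimes\RR$ to membership in $L_G$ itself.
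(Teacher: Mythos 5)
Your proof is correct and follows essentially the same route as the paper's: both arguments use the glue description of the unimodular lattice (each class of $A_{L_G}$ is represented by the $L_G^*$-component of some $v\in L$), observe that $g$ fixes the $(L^G)^*$-component so that $g(v)-v$ lands in $L_G$, and deduce $l(L_G)=l(L^G)\le\rk L^G=24-\rk L_G$. Your phrasing via orthogonal projections $p_\pm$ is just a repackaging of the paper's explicit decomposition $v=x+y$.
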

\begin{proof} The first part is obvious. By definition $G$ acts trivially on $L^G$, and hence 
its induced action on  $A_{L^G}$ is also trivial. Since $L^G$ and $L_G$ are orthogonal 
primitive sublattices of the self-dual lattice $L$, it follows that for each $y\in (L_G)^*$ there exists 
a vector $v=x+y\in L$ with $x\in (L^G)^*$. For all $w\in L^G$ and any lattice automorphism $g\in G$, 
$w\cdot g(v)= g(w)\cdot g(v) = w\cdot v$, so that $g(v)-v\in(L^G)^\perp=L_G$. Since 
$g$ is linear and fixes $x\in(L^G)^*$, we have $g(v)-v=g(x+y)-(x+y)=g(y)-y$. It follows that 
$g(y)\equiv y\mod L_G$, so that $G$ acts trivially on $A_{L_G}=(L_G)^*/L_G$. Finally, 
$l(L_G)=l(L^G)\le \rk (L^G)$,  and the last statement follows.
\end{proof}

Up to isomorphism, there is a unique even unimodular lattice $\Gamma^{25,1}$ of signature $(25,1)$. 
The lattice $\Gamma^{25,1}$ can be defined as the (additive) subgroup of $\RR^{25,1}$ with elements 
$(x_0,\ldots,x_{24};x_{25})$ such that 
\be 
x_0+\ldots +x_{24}-x_{25}\in 2\ZZ\ ,
\ee 
where either $x_i\in\ZZ$ for all $i$, or $x_i\in\ZZ+\frac{1}{2}$ for all $i$. 
In the rest of this subsection, we will prove that $L_G(-1)$ can be embedded into $\Gamma^{25,1}$.
\bigskip

Recall that, for any prime $p$, a $p$-group is a group whose order is a power of $p$. A Sylow 
$p$-subgroup of a group $G$ is a maximal $p$-subgroup, {\it i.e.}\ a $p$-subgroup of $G$ 
which is not a proper subgroup of any other $p$-subgroup. For abelian groups, there is a unique 
Sylow $p$-subgroup for each prime $p$, the subgroup of elements whose order is a power of $p$.
For more general finite groups, for each given $p$ the Sylow $p$-subgroups are all isomorphic and 
related by conjugation.

Let $A_q$ be a finite abelian group with quadratic form $q:A_q\to \QQ/2\ZZ$. For any prime $p$, 
let $A_{q_p}$ be the Sylow $p$-subgroup of $A_q$, and $q_p$ the restriction of $q$ to $A_{q_p}$. 
Note that if $a\in A_{q_p}$, then for any $b\in A_q$ 
\be 
p^n\, (a\cdot b)\equiv 0\mod \ZZ\ ,
\ee 
where $p^n$ is the order of $a$, and  $a\cdot b$ is the bilinear form induced by $q$. 
In particular, if $b\in A_{q_{p'}}$, with $p'\neq p$, this implies 
$a\cdot b\equiv 0\mod \ZZ$. Thus, we have an orthogonal decomposition of the quadratic form 
$q=\oplus_pq_p$, where each $q_p$ is a quadratic form on an abelian $p$-group.

It follows from Theorem~1.12.2 of \cite{Nikulin} that  an 
even lattice $L$ of signature $(t^+,t^-)$ and discriminant group $A_q$ can be primitively embedded 
into some even unimodular lattice of signature $(d^+,d^-)$, provided that
\begin{subequations}\label{embedcond}\begin{align}
&d^+-d^-\equiv 0\mod 8\label{embedcond1}\\
&d^--t^-\ge 0,\quad d^+-t^+\ge 0,\quad d^++d^--t^--t^+\ge l(A_q),\label{embedcond2}\\
&d^-+d^+-t^--t^+> l(A_{q_p})\quad  \text{ for all odd primes $p$},\label{embedcond3}\\
&d^-+d^+-t^--t^+> l(A_{q_2})\quad\text{or}\quad q_2=q_\theta^{(2)}(2)\oplus q_2'\ 
\text{ for some }q_2' \ ,\label{embedcond4} 
\end{align}\end{subequations}
where $q_\theta^{(2)}(2)$ is the discriminant quadratic form of the $\mathfrak{su}(2)$ root lattice 
$A_1\cong\ZZ(2)$.

\begin{proposition}\label{th:embed} The lattice $L_G(-1)$ can be primitively embedded into 
$\Gamma^{25,1}$. The action of $G$ can be extended to an action on $\Gamma^{25,1}$ such that the 
$G$-invariant sublattice $(\Gamma^{25,1})^G$ is the orthogonal complement of $L_G(-1)$ in 
$\Gamma^{25,1}$,  and such that $(\Gamma^{25,1})^G$ contains an element of norm $2$.
\end{proposition}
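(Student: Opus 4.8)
The plan is to prove the three assertions in turn, treating the primitive embedding and the extension of the $G$-action as essentially formal consequences of Nikulin's theory, and concentrating the real work on producing the norm-$2$ vector. For the embedding I would apply the criterion \eqref{embedcond} to $L_G(-1)$, which by Proposition~\ref{p:lemma} is positive definite of rank $r:=\rk L_G\le 20$, so $(t^+,t^-)=(r,0)$, with target signature $(d^+,d^-)=(25,1)$. Condition \eqref{embedcond1} holds since $25-1=24\equiv 0\bmod 8$, while \eqref{embedcond2}--\eqref{embedcond4} all reduce to the estimate $d^++d^--t^+-t^-=26-r=(24-r)+2> l(L_G)\ge l(A_{q_p})$ for every prime $p$, where $l(L_G)\le 24-r$ is the last statement of Proposition~\ref{p:lemma}; note also $d^+-t^+=25-r\ge 5\ge 0$ and $d^--t^-=1\ge 0$. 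Thus $L_G(-1)$ embeds primitively into an even unimodular lattice of signature $(25,1)$, which by uniqueness is $\Gamma^{25,1}$.

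Next I would extend the $G$-action. Write $M$ for the orthogonal complement of $L_G(-1)$ in $\Gamma^{25,1}$; as an orthogonal complement it is primitive, and the gluing description of Appendix~\ref{s:lattices} realises $\Gamma^{25,1}$ as the glue of $L_G(-1)$ and $M$ along an anti-isometry $\gamma\colon A_{L_G(-1)}\to A_{M}$. For each $g\in G$ I extend the given action on $L_G(-1)$ by the identity on $M$; the map $g\oplus\mathrm{id}_M$ respects the glue group and hence defines an automorphism of $\Gamma^{25,1}$, precisely because $G$ acts trivially on $A_{L_G}=A_{L_G(-1)}$ by Proposition~\ref{p:lemma}, so its induced action on discriminant groups commutes with $\gamma$. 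By construction $M\subseteq(\Gamma^{25,1})^G$, and conversely $(L_G(-1)\otimes\RR)^G=0$ since $L_G=(L^G)^\perp$ has no nonzero $G$-fixed vector; hence every $G$-fixed lattice vector lies in $M\otimes\RR$, and therefore in $M$ by primitivity. This gives $(\Gamma^{25,1})^G=M$.

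The existence of a norm-$2$ vector in $M$ is the step I expect to be the main obstacle, since a complement produced by an abstract embedding need not visibly represent $2$. I would circumvent this by exploiting the freedom in the embedding: by Nikulin's theory the primitive embeddings of $L_G(-1)$ into $\Gamma^{25,1}$ are controlled by the choice of complement, namely an even lattice of signature $(25-r,1)$ with discriminant form $-q_{L_G(-1)}=q_{L_G}$ together with gluing data, so it suffices to exhibit one such complement carrying a vector of norm $2$. I would take $M=U\oplus M'$, with $U$ the hyperbolic plane (which contains norm-$2$ vectors) and $M'$ positive definite of rank $24-r$ with discriminant form $q_{L_G}$; such an $M'$ exists by Nikulin's existence theorem, since by Milgram $\mathrm{sign}(q_{L_G})\equiv -r\equiv 24-r\bmod 8$ matches the signature, and the rank bound $24-r\ge l(L_G)$ is once more Proposition~\ref{p:lemma}. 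Gluing $L_G(-1)$ to this $M$ reproduces $\Gamma^{25,1}$ with $(\Gamma^{25,1})^G=M$ containing the desired norm-$2$ vector, completing the argument together with the previous step. The point genuinely requiring care is the verification of the finer $p$-adic conditions in the existence theorem for the \emph{definite} lattice $M'$; I expect the slack $24-r\ge l(L_G)$, combined with the fact that $q_{L_G}$ is already realised as an honest discriminant form, to leave enough room, but this local analysis is where the substance of the proof sits.
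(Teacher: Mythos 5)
Your treatment of the first two assertions is sound and essentially matches the paper: Nikulin's criterion \eqref{embedcond} goes through with room to spare because $\rk L_G+l(L_G)\le 24$ by Proposition~\ref{p:lemma}, and extending each $g\in G$ by the identity on the orthogonal complement $M$ is legitimate precisely because $G$ acts trivially on $A_{L_G}$, with $(\Gamma^{25,1})^G=M$ following from primitivity of $M$ and the absence of nonzero $G$-fixed vectors in $L_G\otimes\QQ$.

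The gap is in the third assertion, and it sits exactly where you flagged it. Your route requires an even \emph{positive definite} lattice $M'$ of rank $24-\rk L_G$ with discriminant form $q_{L_G}$, and you invoke the existence theorem using only the Milgram congruence and the bound $l(A_q)\le\rk M'$. But Proposition~\ref{p:lemma} gives $l(L_G)\le 24-\rk L_G$ with possible equality, and in the boundary case $l(A_{q_p})=\rk M'$ Nikulin's existence theorem imposes additional $p$-adic conditions on $q_p$ that are not automatic and that you do not verify; the fact that $q_{L_G}$ is already realised as a discriminant form (by lattices of a different signature) does not supply them. The paper sidesteps this local analysis entirely with a small trick: it applies the embedding criterion \eqref{embedcond} not to $L_G(-1)$ but to $L_G(-1)\oplus A_1$. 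The budget still closes --- the extra unit of rank is absorbed because $l(A_{q_p})\le l(L_G)\le 24-\rk L_G<26-\rk(L_G\oplus A_1)$ at odd primes, while at $p=2$ condition \eqref{embedcond4} is satisfied through its second alternative, since $q_2$ visibly splits off $q_\theta^{(2)}(2)$, the discriminant form of the $A_1$ summand. The $A_1$ then lands in the orthogonal complement of $L_G(-1)$ by construction, producing the norm-$2$ vector with no definite-lattice existence question ever arising. Replacing your construction of $M'$ by this one-line enlargement of the lattice being embedded closes the gap.
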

\begin{proof} Let us prove that we can embed $L_G(-1)\oplus A_1$ into $\Gamma^{25,1}$, where 
$A_1$ denotes the root lattice of the $\mathfrak{su}(2)$ Lie algebra. 
Condition \eqref{embedcond1} obviously holds. By proposition \ref{p:lemma}, we have 
$\rk(L_G(-1)\oplus A_1)+l(L_G(-1)\oplus A_1)\le 26$, so that also  \eqref{embedcond2} is satisfied. 
Let $A_q$ be the discriminant group of $L_G(-1)\oplus A_1$, with discriminant quadratic form $q$. 
Let us consider the decomposition $q=\oplus_{p}q_p$ where, for each prime $p$, $q_p$ is the 
restriction of $q$ to the $p$-Sylow subgroup $A_{q_p}$. Since the discriminant group of 
$A_1$ is $\ZZ/2\ZZ$,  we have $l(A_{q_p})\le l(L_G)<26-\rk(L_G\oplus A_1)$ for all odd $p$, and 
\eqref{embedcond3} holds. Finally, it is clear that $q_2=q_\theta^{(2)}(2)\oplus q_2'$, where 
$q_\theta^{(2)}(2)$ is the discriminant form of $A_1$ and $q_2'$ is the restriction of $q_{L_G}$ to the 
$2$-Sylow subgroup. Thus, it follows from \eqref{embedcond} that $L_G(-1)\oplus A_1$ can be 
primitively embedded into $\Gamma^{25,1}$. 

Since $G$ acts trivially on $A_{L_G}$, the action of $G$ on $L_G(-1)$ can be extended to an action on 
$\Gamma^{25,1}$ which acts trivially on the orthogonal complement of $L_G(-1)$ in $\Gamma^{25,1}$. 
Thus, $(\Gamma^{25,1})^G\cong (L_G(-1))^\perp$, and $A_1$ is a sublattice of $(\Gamma^{25,1})^G$, 
so that $(\Gamma^{25,1})^G$ contains a vector of norm $2$.
\end{proof}

\subsection{$G$ as a Subgroup of $Co_0$}\label{s:proofConway}

The automorphism group ${\rm Aut}(\Gamma^{25,1})$ of $\Gamma^{25,1}$ is 
generated by the sign flip $x_{i}\mapsto -x_{i}$, together with the subgroup of autochronous transformations 
${\rm Aut}^+(\Gamma^{25,1})$ which stabilise the cone of positive time vectors in $\RR^{25,1}$ 
(see chapter 27 of \cite{Conway}). The group ${\rm Aut}^+(\Gamma^{25,1})$ contains a normal 
subgroup $W$ (the Weyl group), generated by the reflections $R_r$ with respect to the hyperplanes
$r^\perp$
\be 
R_r(x)=x-(x\cdot r)r\ ,\qquad x\in \Gamma^{25,1}\ ,
\ee 
where $r$ is any root in  $r\in \Gamma^{25,1}$, {\it i.e.}\ satisfies $r\cdot r=2$. 
The complement in $\RR^{25,1}$ of the union $\bigcup_{r\cdot r=2}r^\perp$ of the corresponding hyperplanes 
has infinitely many connected components, and the closure of each component is called a Weyl chamber.

The group of autochronous transformations is the semidirect product $W\rtimes Co_{\infty}$, 
where $Co_\infty$ is the automorphism group of the Dynkin diagram of $W$. 
The groups $W$ and $Co_{\infty}$ can be described more explicitly upon choosing a set of generators 
for $W$, {\it i.e.}\ a set of fundamental roots. One convenient choice is given by the 
set of Leech roots, {\it i.e.}\ by the vectors $r\in \Gamma^{25,1}$ with
\be 
r\cdot r=2\qquad r\cdot w=-1\ ,
\ee 
where $w$ is the null (Weyl) vector
\be 
w=(0,1,2,3,\ldots,23,24;70)\in \Gamma^{25,1}\ .
\ee 
The sublattice $\Gamma^{25,1}\cap w^\perp$ is degenerate, while its quotient 
$(\Gamma^{25,1}\cap w^\perp)/w$ is the Leech lattice 
$\Lambda$ (see chapter 26 of \cite{Conway}), the unique 
positive even  unimodular lattice of rank $24$ containing no roots.
The automorphism group $Co_\infty$ of the Dynkin diagram of $W$ contains 
${\rm Aut}(\Lambda)=Co_0$, the automorphism group of the Leech lattice, as the subgroup 
which fixes a given reference Leech root $\bar r$. The group $Co_{\infty}$ is generated by 
$Co_0$, together with the translations of the Leech roots by vectors in $\Lambda$.
\medskip

Let us consider the embedding of $L_G(-1)\oplus A_1$ into $\Gamma^{25,1}$. Clearly, the sign flip does
not fix any linear sublattice of $\Gamma^{25,1}$, so that $G\subseteq {\rm Aut}^+(\Gamma^{25,1})$.  Note 
that  the lattice $(\Gamma^{25,1})^G$ always contains a vector in the interior of some Weyl chamber. For if this
was not true, then $(\Gamma^{25,1})^G$ would be contained in one of the hyperplanes orthogonal 
to some root $r$, and thus $r\in (\Gamma^{25,1})_G=L_G(-1)$. But this would contradict our 
assumption that $L_G$ contains no vector of norm $-2$.
Since $W$ acts transitively on the Weyl chambers, 
we can choose our embedding of $L_G(-1)$  into $\Gamma^{25,1}$ such that $(\Gamma^{25,1})^G$ 
contains a vector in the interior $K$ of the fundamental Weyl chamber containing $w$. Since 
$K\cap t(K)=\emptyset$, for all non-trivial $t\in W$, it follows that $G$ must be contained in 
$Co_\infty$. Since 
$w$ is fixed by $Co_\infty$, we have $(\Gamma^{25,1})_G\subset (\Gamma^{25,1}\cap w^\perp)$, and 
the projection $(\Gamma^{25,1}\cap w^\perp)\to (\Gamma^{25,1}\cap w^\perp)/w\cong\Lambda$ 
induces an embedding of $(\Gamma^{25,1})_G$ into the Leech lattice $\Lambda$
\be 
L_G(-1)\cong(\Gamma^{25,1})_G\subset (\Gamma^{25,1}\cap w^\perp)\to
 (\Gamma^{25,1}\cap w^\perp)/w\cong \Lambda\ .
\ee  As in proposition \ref{th:embed}, the action of $G$ on $(\Gamma^{25,1})_G$ can be 
extended to an action on $\Lambda$, such that 
$L_G(-1)\cong(\Gamma^{25,1})_G\cong\Lambda_G$ is the orthogonal complement of the 
sublattice $\Lambda^G\subset \Lambda$ fixed by $G$. 
We conclude that  $G$ is a subgroup of ${\rm Aut}(\Lambda)\cong Co_0$ fixing a 
sublattice $\Lambda^G$ of rank at least $4$, 
thus proving the Proposition in Section~\ref{symmetries}.

\subsection{The Proof of the Theorem}\label{s:proofs}

The stabilisers of sublattices of the Leech lattice have been classified \cite{Curtis,Atlas}. 
We will use this classification to prove now the Theorem stated in the Introduction.
The action of ${\rm Aut}(\Lambda)\cong Co_0$ is well defined on the classes of the 
quotient $\Lambda/2\Lambda$, because 
$2\Lambda$ is stable under lattice automorphisms. In particular, if $v\in \Lambda$ is fixed by the action of 
$G$, then $G$ must be in the stabiliser of the class of $\Lambda/2\Lambda$ containing $v$. Note that
opposite vectors $x,  -x\in\Lambda$ are contained in the same class in this quotient. More 
generally, if we define the \emph{short vectors} $x\in\Lambda$ to be the vectors of norm $x^2\le 8$, 
then for each non-trivial class in $\Lambda/2\Lambda$, one of the following mutually exclusive 
alternatives holds \cite{Conway}:
\begin{enumerate}
\item the class contains exactly one pair of short vectors $\pm x\in\Lambda$ of type 2 ($x^2=4$);
\item the class contains exactly one pair of short vectors $\pm x\in\Lambda$ of type 3 ($x^2=6$);
\item the class contains exactly $24$ pairs of short vectors $\pm x_1,\ldots,\pm x_{24}\in\Lambda$ of type $4$, that are mutually orthogonal ($x_i\cdot x_j=8\delta_{ij}$).
\end{enumerate}
\noindent Thus, each primitive vector $v\in \Lambda^G$ is congruent modulo $2\Lambda$ to some short 
vector $v_s\in\Lambda$, $v_s^2\le 8$. We can now distinguish the following cases:
\smallskip

\noindent {\bf Case 1:} 
Suppose there is a vector $v\in\Lambda^G$ which is congruent modulo $2\Lambda$ to a short 
vector of norm $8$. Then the class of $v$ in $\Lambda/2\Lambda$ contains $24$ mutually 
orthogonal pairs $\pm x_1,\ldots,\pm x_{24}\in\Lambda$ of norm $8$. The subgroup of $Co_0$ 
stabilising such a class is the semidirect product $N= \ZZ_2^{12}\rtimes \MM_{24}$.
Here, $\MM_{24}$ acts by permutations of the $24$ pairs $\pm x_1,\ldots,\pm x_{24}$, 
while each element $\epsilon_f\in \ZZ_2^{12}$ is associated to a codeword 
$f\equiv (f_1,\ldots,f_{24})\in (\ZZ/2\ZZ)^{24}$ in the binary Golay code (\cite{Conway}, chapter 11) 
and acts by
\be\label{Golayaction} 
\epsilon_f(x_i)= (-1)^{f_i}x_i\ ,\qquad i=1,\ldots,24 \ ,
\ee
on the vectors of the class. Thus, $G$ is a subgroup of $\ZZ_2^{12}\rtimes \MM_{24}$ that fixes a 
subspace of dimension at least $4$. This realises case (i) of the Theorem. In particular, 
$G''\subset\MM_{24}$ can be any subgroup with at least four orbits $\Omega_1,\ldots, \Omega_4$ 
when acting on $\{1,\ldots,24\}$, and $G'$ is generated by the $\epsilon_f\in\ZZ_2^{12}$ such that 
$f_i=0$ for all $i\in\Omega_1\sqcup\ldots \sqcup\Omega_4$.
\bigskip

\noindent {\bf Case 2:} If Case 1 does not apply, then 
each primitive vector $v\in\Lambda^G$ is congruent modulo $2\Lambda$ to a pair of short vectors 
$\pm v_s\in\Lambda$ with $v_s^2\le 6$. Let us assume that, for each  $v\in \Lambda^G$, the corresponding 
short vector $v_s$ is also contained in $\Lambda^G$. Since $\Lambda^G$ is primitive, 
$(v-v_s)/2\in\Lambda$ is also contained in $\Lambda^G$. The sublattices 
$S\subset \Lambda$ containing only short vectors of norm $4$ and $6$ and such that each primitive 
vector is congruent to a short one mod $2S$ are called $\calS$-lattices, and they have been completely 
classified \cite{Curtis}. In particular, up to automorphisms, there are only three $\calS$-lattices of rank at 
least $4$ \cite{Atlas}:
$$\begin{array}{cccc}
    S \qquad & \rk S \qquad & {\rm Stab}(S) \qquad & {\rm Aut}(S)\\
    2^93^6 \qquad & 4 \qquad & \ZZ_3^4\rtimes A_6\qquad  & \ZZ_2\times(S_3\times S_3).\ZZ_2\\
    2^{5}3^{10} \qquad & 4 \qquad & 5^{1+2}.\ZZ_4 \qquad & \ZZ_2\times S_5\\
    2^{27}3^{36} \qquad  & 6 \qquad & 3^{1+4}.\ZZ_2 \qquad & \ZZ_2\times U_4(2).\ZZ_2 \ .
\end{array}
$$
Since $\Lambda^G$ is a $\calS$-lattice of rank at least $4$, $G$ must be one of the groups 
${\rm Stab}(S)$ from above, corresponding to the cases (ii), (iii) and (iv) (with $G''$ trivial) of 
the Theorem.
\medskip

\noindent {\bf Case 3:} 
The last case arises if each primitive vector $v\in\Lambda^G$ is congruent modulo $2\Lambda$ to a pair 
of short vectors $\pm v_s\in\Lambda$ with $v_s^2\le 6$, but some of these short vectors are not 
contained in $\Lambda^G$. In this case, we define a finite chain of sublattices of $\Lambda$
\be\label{latticechain}
\Lambda^G=S_0\subset S_1\subset\ldots \subset S_N=S\ ,
\ee
where each $S_{i+1}$ is defined in terms of $S_i$ as follows \cite{Allcock}:
\begin{enumerate}
\item[(1)] If $S_i$ is contained in an $\calS$-lattice of the same rank or if $S_i$ contains a vector 
congruent modulo $2\Lambda$ to a short vector of norm $8$, then we set $S=S_i$ and the 
procedure stops.
\item[(2)] Otherwise, if $S_i$ contains a vector $v\in 2\Lambda$ with $v/2\notin S_i$,
then $S_{i+1}$ is obtained by adjoining $v/2$ to $S_i$. There are only a finite number of 
vectors to check, namely one representative for each class in $S_i/2S_i$. Note 
that $S_i\otimes\QQ=S_{i+1}\otimes\QQ$, so that 
$\rk S_i=\rk S_{i+1}$ and ${\rm Stab}(S_i)={\rm Stab}(S_{i+1})$.
\end{enumerate}
If neither of these cases applies, then there is a nonempty set of short vectors 
$v_s\notin S_i\otimes\QQ$, with $0< v_s^2\le 6$, congruent modulo $2\Lambda$ to some 
$v\in S_i$. Then $S_{i+1}$ is defined as follows:
 \begin{enumerate}\setcounter{enumii}{2}
\item[(3)] If one of the short vectors is such that $v_s\cdot w\neq 0$ for some $w\in S_{i}$, then 
$S_{i+1}$ is obtained by adjoining $v_s$ to $S_i$. Any element $g\in {\rm Stab}(S_i)$ must preserve 
the class of $v_s$ in $\Lambda/2\Lambda$, so that $g(v_s)\in\{\pm v_s\}$, and also the product 
$v_s\cdot w\neq 0$. Thus, the only possibility is $g(v_s)=v_s$, so that 
${\rm Stab}(S_{i+1})={\rm Stab}(S_i)$. Furthermore, we have a strong inequality 
$\rk S_{i+1}>\rk S_i$.
\item[(4)] If all the short vectors $v_s$ are orthogonal to $S_i$, we choose one of them and define 
$S_{i+1}$ as the $\ZZ$-linear span of $S_i$ and $v_s$. The only non-trivial action of an element 
$g\in {\rm Stab}(S_i)$ on $S_{i+1}$ is $g(v_s)=-v_s$, so that $|{\rm Stab}(S_i):{\rm Stab}(S_{i+1})|\le 2$.
Furthermore, $\rk S_{i+1}>\rk S_i$.
 \end{enumerate}
The stabiliser ${\rm Stab}(S)$ of the lattice $S$ at the end of the chain must be a subgroup of 
$\ZZ_2^{12}\rtimes\MM_{24}$ or one of the stabilisers of the $\calS$-lattices above. However, 
if the case (4) of the above procedure occurs for some intermediate $S_i$, ${\rm Stab}(S)$ might 
be just a normal subgroup of $G$. Our analysis is greatly simplified by the following result, 
which is an immediate consequence of Lemma 4.8 of \cite{Allcock}.
\begin{proposition}
If $\rk S_i>3$ and none of the cases (1), (2) and (3) applies, then $S_i$ is a sublattice of the 
$\calS$-lattice $2^{27}3^{36}$.
\end{proposition}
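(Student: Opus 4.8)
The plan is to deduce the statement directly from Lemma~4.8 of \cite{Allcock}; the genuine content is then to check that the failure of cases (1)--(3) reproduces exactly the hypotheses of that lemma. First I would translate the three negative conditions into lattice-theoretic data. Failure of case (1) says two things at once: $S_i$ contains no vector congruent mod $2\Lambda$ to a short vector of norm $8$ (no type-$4$ congruences), and $S_i$ is not contained in any $\calS$-lattice of its own rank. Failure of case (2) says $S_i$ is \emph{$2$-saturated} in $\Lambda$, i.e.\ $S_i\cap 2\Lambda=2S_i$. Finally, by the dichotomy built into the construction, the failure of (1) and (2) already guarantees a non-empty set of short vectors $v_s\notin S_i\otimes\QQ$ with $0<v_s^2\le 6$ that are congruent mod $2\Lambda$ to some $v\in S_i$; since case (3) also fails, every such $v_s$ must be orthogonal to all of $S_i$. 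Thus the hypotheses place us precisely in the case-(4) configuration: a norm-$4$ or norm-$6$ vector $v_s\perp S_i$ with $v_s\equiv v\mod 2\Lambda$ for some $v\in S_i$, together with $2$-saturation and the absence of type-$4$ congruences.

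Second I would invoke Allcock's lemma, which treats exactly this configuration: for a sublattice of $\Lambda$ of rank $>3$ enjoying these properties, the existence of such an orthogonal short vector congruent to a lattice vector forces $S_i$ into an $\calS$-lattice. I would then read off which one. By the classification recalled above, the only $\calS$-lattices of rank at least $4$ are $2^93^6$ and $2^53^{10}$ (both of rank $4$) and $2^{27}3^{36}$ (of rank $6$). Because case (1) fails, $S_i$ cannot sit inside an $\calS$-lattice of its own rank, so the hosting lattice must have rank strictly larger than $\rk S_i\ge 4$; the classification leaves only $2^{27}3^{36}$, and this moreover pins down $\rk S_i\in\{4,5\}$ (rank $6$ would make the containment same-rank, contradicting the failure of (1)). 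This yields $S_i\subseteq 2^{27}3^{36}$ up to a Leech automorphism, as claimed.

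The main obstacle — and the reason the conclusion is \emph{immediate} only once Lemma~4.8 is in hand — is the verification that the lattices $S_i$ produced by the chain really satisfy the lemma's hypotheses. Chiefly one must confirm that $S_i$ inherits the property that every primitive vector is congruent mod $2\Lambda$ to a short vector of norm $\le 6$: this holds for $S_0=\Lambda^G$ by the assumption common to Cases 2 and 3 of the main argument, and one should check it is preserved under each enlargement step (2)--(4). One must also be certain the orthogonal short vector genuinely exists (the internal claim of the construction that failure of (1),(2) manufactures such a $v_s$), and that the bookkeeping distinguishing $2^{27}3^{36}$ from the rank-$4$ $\calS$-lattices is correct; heuristically it is the rich $\ZZ[\omega]$ (Eisenstein) structure of $2^{27}3^{36}$ that accommodates orthogonal short-vector extensions, whereas the rank-$4$ lattices admit none once same-rank containment is excluded. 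The hypothesis $\rk S_i>3$ is essential throughout, since at smaller rank both the list of $\calS$-lattices and the admissible orthogonal configurations change.
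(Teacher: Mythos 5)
Your proposal is correct and follows the same route as the paper: the paper's entire proof of this proposition is the single remark that it is ``an immediate consequence of Lemma~4.8 of Allcock'', and you invoke exactly that lemma, merely spelling out the (routine) translation of the failure of cases (1)--(3) into its hypotheses and the bookkeeping with the classification of $\calS$-lattices of rank $\ge 4$ that singles out $2^{27}3^{36}$. The extra caveats you flag (existence of the orthogonal short vector, preservation of the norm-$\le 6$ congruence property) are reasonable points of care but do not change the argument.
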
 
It is easy to see that if $S_i$ is a sublattice of an $\calS$-lattice also $S_{i+1}$ is. Since the starting point
$S_0=\Lambda^G$ of the chain has already rank greater than $3$, this proposition implies that if 
$\Lambda^G$ is not a sublattice of $2^{27}3^{36}$, then case (4) above never occurs. Thus, in this case, 
$G\cong {\rm Stab}(S)$ and we reobtain the groups (i), (ii) or (iii) of the Theorem. 
If $\Lambda^G$ is a sublattice of $2^{27}3^{36}$, then we have an inverse chain of inclusions for the 
stabilisers
\be 
{\rm Stab}(S)\subseteq \ldots \subseteq {\rm Stab}(S_1)\subseteq G={\rm Stab}(S_0)\ ,
\ee
where each stabiliser  group is a subgroup of index at most $2$ in the previous one 
\be 
|{\rm Stab}(S_i):{\rm Stab}(S_{i+1})|\le2\ .
\ee 
Furthermore, whenever $|{\rm Stab}(S_i):{\rm Stab}(S_{i+1})|=2$, we have 
$\rk S_{i+1}>\rk S_i$, and since $\rk S_{2^{27}3^{36}}-\rk \Lambda^G\le 2$, 
we have
\be\label{indexbound} 
\frac{|G|}{|{\rm Stab}(S_{2^{27}3^{36}})|}\le 2^2\ .
\ee Thus, $G$ is the extension of ${\rm Stab}(S_{2^{27}3^{36}})$ by a group $G''$ of order at most 
$4$, and all the possibilities are considered in the case (iv) of the Theorem. This completes the proof
 of the Theorem.

\subsection{Realising all Symmetry Groups of K3}\label{B.4}

Let $G\subset {\rm Aut}(\Lambda)$ be the pointwise stabiliser of a sublattice 
$S\cong \Lambda^G$ of rank $\rk S\ge 4$. In this section, we will prove that, for each such 
$G$, there exists a non-linear $\sigma$-model on K3 having $G$ as its group of symmetries. 
This result is based on the assumption that for every choice of a positive definite $4$-dimensional
subspace  $\Pi\subset \RR^{4,20}$ such that no root $v\in\Gamma^{4,20}$ is orthogonal to $\Pi$, the 
corresponding non-linear $\sigma$-model is well defined.

Let $d+4$, with $d\ge 0$, be the rank of $S$ and let $S^\perp\equiv \Lambda_G$ be 
its orthogonal complement in $\Lambda$. Suppose that $S^\perp(-1)$, obtained from 
$S^\perp$ by changing the sign of its quadratic form, can be primitively embedded into 
$\Gamma^{4,20}$. Then, by a reasoning similar to proposition \ref{th:embed}, we conclude 
that the action of $G$ on $S^\perp(-1)$ can be extended to $\Gamma^{4,20}$, in such a way 
that $(\Gamma^{4,20})_G\cong S^{\perp}(-1)$, {\it i.e.}\ that the sublattice $(\Gamma^{4,20})^G$
invariant under $G$ is the orthogonal complement of $S^\perp(-1)$. Since $(\Gamma^{4,20})^G$ 
has signature $(4,d)$, one can always find a positive definite four dimensional subspace 
$\Pi\subset\RR^{4,20}$ such that $(\Gamma^{4,20})_G=\Gamma^{4,20}\cap\Pi^\perp$. Furthermore, 
$(\Gamma^{4,20})_G$ contains no vectors of norm $-2$, and $G$ is the subgroup of $O(\Gamma^{4,20})$ 
fixing $\Pi$ pointwise. Following the arguments of Section~\ref{symmetries}, $\Pi$ corresponds to a well defined
non-linear $\sigma$-model with symmetry group $G$.

Thus it remains to prove  that a primitive embedding of $S^\perp(-1)$ into $\Gamma^{4,20}$ 
always exists. The `gluing' construction described in Appendix~\ref{s:GinOGamma} shows that, for an even 
lattice with signature $(t_+,t_-)$ and discriminant form $q$  --- we will denote them as a
triple $(t_+,t_-;q)$ ---  the existence of an embedding into some even unimodular lattice with 
signature $(l_+,l_-)$ is equivalent to the existence of an even lattice with signature and discriminant 
form $(l_+-t_+,l_--t_-,-q)$. In \cite[Theorem 1.12.4]{Nikulin},  the following sufficient conditions are proved 
for the existence of such an embedding:
\begin{enumerate}
\item $l_+-l_-\equiv 0\mod 8$
\item $t_+\le l_+$, $t_-\le l_-$ and $t_++t_-\le \frac{1}{2}(l_++l_-)$.
\end{enumerate} 
The lattice $S$ has signature $(4+d,0)$, so that it can be embedded into an even unimodular lattice 
$\Gamma^{8+d,d}\cong E_8\oplus U^d$ of signature $(8+d,d)$, where $U$ is the unimodular lattice of 
signature $(1,1)$. Let $S'$ be the orthogonal complement of $S$ in $\Gamma^{8+d,d}$. If $q$ 
denotes the discriminant form of $S$, then $S'$, $S^\perp$ and $S^\perp(-1)$ have signature and 
discriminant form $(4,d;-q)$, $(20-d,0;-q)$ and $(0,20-d;q)$, respectively. By comparing the signatures 
and discriminant forms of $S^\perp(-1)$ and $S'$, we conclude that these two lattices can be `glued' 
together to form the even unimodular lattice $\Gamma^{4,20}$.

\section{Gepner Models}\label{s:Gepner}

Here we collect, following \cite{Brunner:2005fv}, our conventions for the description of $\N=2$ minimal 
models and Gepner models.

\subsection{$\N=2$ Minimal Models and Gepner Models at $c=6$} \label{App:GepnerModel}

The $\N=2$ minimal model at  level $k$ has central charge $c=\tfrac{3k}{k+2}$, and can 
be described in terms of the coset 
\be 
\frac{\mathfrak{su}(2)_{k+2}\oplus \mathfrak{u}(1)_{4}}{\mathfrak{u}(1)_{2k+4}} 
\ee
that captures  the bosonic subalgebra of the $\N=2$ superconformal algebra. 
The coset representations are labeled by
\be (l,m,s)\ ,\qquad l=0,\ldots,k,\quad m\in \ZZ_{2k+4}\ ,\quad s\in\ZZ_4\ ,
\ee
subject to the condition
\be l+m+s=0\mod 2\ ,
\ee
and with the field identification
\be (l,m,s)\sim (k-l,m+k+2,s+2)\ .
\ee
We denote by $[l,m,s]$ the class corresponding to $(l,m,s)$. In terms of the $\N=2$ algebra,
the irreducible representations are of the form $\Hh_{(l,m,s)}\oplus \Hh_{(l,m,s+2)}$ 
with $s$ even (odd) for the NS (R) sector, since the fermionic generators of the $\N=2$ algebra
map  $\Hh_{(l,m,s)}$ to $\Hh_{(l,m,s+2)}$. The conformal weight and
$U(1)$-charge of the ground state in the $(l,m,s)$ sector are given by
\be h_{l,m,s}=\frac{l(l+2)-m^2}{4(k+2)}+\frac{s^2}{8}\mod\ZZ\ ,\qquad q_{l,m,s}
=\frac{m}{k+2}-\frac{s}{2}\mod 2\ZZ\ .
\ee
The character of the $(l,m,s)$ coset representation equals (with $q=e^{2\pi i\tau}$, $y=e^{2\pi i z}$)
\be \chi_{[l,m,s]}(\tau,z)=\Tr_{{\cal H}_{[l,m,s]}} (q^{L_0-\frac{c}{24}} y^{J_0}) = 
\sum_{j\in \ZZ} c^l_{m+4j-s}(\tau)
q^{\frac{k+2}{2k}(\frac{m}{k+2}-\frac{s}{2}+2j)^2}y^{\frac{m}{k+2}-\frac{s}{2}+2j}\ .
\ee
Here, $c^l_m(\tau)$ can be obtained from the identity
\be \sum_{m\in \ZZ/2k\ZZ} c^l_m(\tau)  \theta_{m,k}(\tau,z)=
\frac{\theta_{l+1,k+2}(\tau,z)-\theta_{-l-1,k+2}(\tau,z)}{\theta_{1,k+2}(\tau,z)-\theta_{-1,k+2}(\tau,z)}\ ,
\ee
where $\theta_{m,k}(\tau,z)$ is the $\mathfrak{su}(2)$ theta function
\be \theta_{m,k}(\tau,z)=\sum_{n\in\ZZ}q^{k(n+\frac{m}{2k})^2}y^{k(n+\frac{m}{2k})}
\ee
with $m\in \ZZ/2k\ZZ$. For the calculation of the elliptic genus we are interested in
the trace with the insertion of $(-1)^F$; for the $\N=2$ representation corresponding
to $(l,m,s)$, this leads to 
\be
I^l_m(\tau,z)=\chi_{[l,m,s]}(\tau,z)-\chi_{[l,m,s+2]}(\tau,z) \ ,\label{CharacterN2rep}
\ee
where $s=0$ in the NS sector and $s=1$ in the Ramond sector.
\medskip

A Gepner model at $c=6$ is defined as a $\ZZ_H$ orbifold of a tensor product
$(k_1)\cdots(k_r)$ of $\N=2$ minimal models, where 
\be
\sum_{i=1}^r \frac{3k_i}{k_i+2}=6\ , \qquad \hbox{and} \qquad H=\lcm\{k_i+2\}\ .
\ee
The spectrum of the orbifold theory is given by
\be
\bigoplus_{\substack{n\in\ZZ_H\\ [l_i,m_i,s_i]}}
\bigotimes_{i=1}^r \Hh_{[l_i,m_i+n,s_i]}\otimes \bar \Hh_{[l_i,m_i-n,\bar s_i]}\ ,
\ee
and the sum is over the sectors $[l_i,m_i,s_i]$ satisfying the orbifold conditions
\begin{align} &\sum_{i=1}^{r}\frac{m_i}{k_i+2}\in\ZZ &\text{$r$ even}\ ,\\
&\sum_{i=1}^{r}\frac{m_i}{k_i+2}-\frac{s_1}{2}\in\ZZ &\text{$r$ odd}\ ,
\end{align}
together with the spin alignment condition
\be s_i-s_j\in 2\ZZ\ ,\qquad i,j=1,\ldots,r\ .
\ee
We denote a state transforming in the $\Hh_{l,m,s}\otimes \bar\Hh_{l,\bar m,\bar s}$ coset
representations by
\be \Phi^l_{m,s;\bar m,\bar s}=\phi_{l,m,s}\otimes\bar \phi_{l,\bar m,\bar s}\ .\label{GepnerStates}
\ee
Imposing the various constraints, it is then clear that the elliptic genus of these Gepner models 
is given by \cite{Kawai:1993jk}
\begin{align}
\phi(\tau,z)&=\frac{1}{2^{r-1}\, H}\sum_{a,b=0}^{H}\prod_{i=1}^r
\sum_{l_i=0}^{k_i} \sum_{m_i= -k_i-1}^{k_i+2}\, 
e^{\frac{2\pi i(m_i+a)b}{k_i+2}}
I^{l_i}_{m_i}(\tau,z)I^{l_i}_{m_i+2a}(\bar{\tau},0)\ .\label{EllGenGepner}
\end{align}
Using 
\begin{align}
I^l_m(\tau,0)=\delta_{m,l+1}-\delta_{m,-l-1}
\end{align}
we can directly evaluate (\ref{EllGenGepner}). For the two cases considered in this paper,
namely $(1)^6$ and $(2)^4$, we then find indeed (\ref{K3eg}).

\subsection{D-branes in Gepner Models}\label{s:Dbranes}

A-type (B-type) D-branes satisfy the gluing conditions
\begin{equation}
\begin{array}{llll}
(L_n-\bar L_{-n})|\!|{\rm A}\rrangle=0 \quad
&(J_n-\bar J_{-n})|\!|{\rm A}\rrangle=0 \quad 
&(G^{\pm}_r+i\eta\bar G^{\pm}_{-r})|\!|{\rm A}\rrangle=0  \quad
& \text{(A-type)} \label{Atype}\\
(L_n-\bar L_{-n})|\!|{\rm B}\rrangle=0
&(J_n+\bar J_{-n})|\!|{\rm B}\rrangle=0
& (G^{\pm}_r+i\eta\bar G^{\pm}_{-r})|\!|{\rm B}\rrangle=0 
&\text{(B-type)}\ ,
\end{array}
\end{equation}
where $\eta=\pm 1$. In our construction we shall always consider tensor product A-type D-branes, as well
as permutation B-type D-branes. The former are described by the boundary states
\cite{Recknagel:1997sb}
\begin{align}
|\!|L_i,M_i,S_i\rrangle_{{\rm A},s}=&\N e^{-\frac{-\pi i}{2}s\sum_iS_i}
\sum_{l_i=0}^{k_i}\sum_{m_i=0}^{k_i+1}\sum_{\nu_i=0}^1
\prod_i\Bigl(\frac{1+(-1)^{l_i+m_i+s}}{2}\Bigr)
\Bigl(\frac{1}{H}\sum_{t\in\ZZ_H}e^{2\pi it\sum_i\frac{m_i}{k_i+2}}\Bigr)\nonumber\\
& \cdot (-1)^{\sum_iS_i\nu_i}e^{\pi i\sum_i M_i\frac{m_i}{k_i+2}}
\prod_i\frac{S_{L_il_i}}{\sqrt{S_{0l_i}}}|l_i,m_i,s+2\nu_i\rrangle_{\rm A}\ ,
\end{align}
where $H=\lcm\{k_i+2\}$, $\N=\sqrt{H}\prod_i\bigl(\frac{2}{k_i+2}\bigr)^{1/4}$ and
\be 
L_i=0,\ldots, k_i \ , \quad M_i\in\ZZ_{2k_i+4} \ , \quad S_i\in\ZZ_4 \ , \qquad
L_i+M_i+S_i\text{ even}\ , \quad S_i+S_j\text{ even.} \nonumber
\ee
Here $s=0,1$ labels the NSNS or the RR closed string sector, respectively,  and
\be
S_{L_il_i}=\sqrt{\frac{2}{k_i+2}}\sin\Bigl(\pi\frac{(L_i+1)(l_i+1)}{k_i+2}\Bigr)
\ee
is the $S$-matrix of the $\mathfrak{su}(2)$ affine algebra at level $k_i$. Furthermore
\be
|l_i,m_i,s+2\nu_i\rrangle_{\rm A}\in \otimes_i\Hh_{[l_i,m_i,s+2\nu_i]}
\otimes\bar\Hh_{[l_i,m_i,s+2\nu_i]}
\ee
is the Ishibashi state satisfying the A-type gluing conditions (\ref{Atype}) for each minimal
model factor separately.  The left-moving world-sheet fermion number operator acts on this boundary 
state as
\be
(-1)^{F_L}\|L_i,M_i,S_i\rrangle_{{\rm A},s}=\|L_i,M_i+1,S_i+1\rrangle_{{\rm A},s}\ .
\ee
The overlap between two such boundary states is given by
\begin{align} {}_{{\rm A},s}\llangle L_i',M_i',S_i'|\!|
&q^{\frac{1}{2}(L_0+\bar L_0)-\frac{c}{24}}(-1)^{F_L} |\!|L_i,M_i,S_i\rrangle_{{\rm A},s}\\
&=\sum_{t\in\ZZ_H}\sum_{(l_i,m_i,s_i)}e^{-\frac{\pi i s}{2}
\sum_i(S_i+1-S_i'+ s_i)}\prod_i\delta^{(2)}(S_i+1-S_i'+s_i)\notag\\
 &\qquad \cdot \prod_i
 \Bigl[\delta^{(k_i+2)}\Bigl(\frac{M_i+1-M_i'+ m_i}{2}+t\Bigr)\,
 N_{L_i l_i}^{L_i'}\, \chi_{[l_i,m_i,s_i]}(\tilde q)\Bigr] \ ,\notag
\end{align}
where $\tilde{q}$ is the open string parameter.
\medskip

Given a permutation $\pi\in S_r$, a permutation brane $|\!|{\rm B}\rrangle^\pi$ satisfies the
$\pi$-twisted B-type boundary conditions
\begin{align}\label{Bpbound}
(L_n^{(i)}-\bar L_{-n}^{(\pi(i))})\|B\rrangle^\pi=0\ , \qquad 
(J_n^{(i)}+\bar J_{-n}^{\pi(i)})\|B\rrangle^\pi=0\ ,
\end{align}
and similarly for the fermionic gluing conditions.
For example, for the $(1)^6$ model we consider $\pi\in S_6$ with cycle decomposition $(12)(34)(56)$.
The corresponding permutation brane is defined as \cite{Recknagel:2002qq}
\begin{align}
|\!|L_1,&L_2,L_3,M_1,M_2,M_3,\hat M,S_i\rrangle^{\pi}_{{\rm B},s} \notag \\
&=\frac{1}{2^3}\frac{1}{\sqrt{3}}\sum_{n\in\ZZ_3}e^{-\frac{\pi in\hat M}{3}}\sum_{l_1,l_2,l_3=0}^1\, 
\sum_{m_1,m_2,m_3=0}^5\, \sum_{\nu_1,\ldots,\nu_6\in\ZZ_2}
\prod_{i}\Bigl(\frac{1+(-1)^{l_i+m_i+n+s}}{2}\Bigr) \notag \\
&\quad \cdot\prod_i
\frac{S_{L_il_i}}{S_{0l_i}}e^{i \frac{\pi}{3} \sum_i^3 m_iM_i}(-1)^{\sum_i^6 S_i\nu_i}
e^{- i \frac{s \pi}{2} \sum_i^6S_i} |[l_i,m_i+n,s+2\nu_i]\rrangle^{\pi,g^n}_{{\rm B}}\ ,  \label{16Bper}
\end{align}
where $\hat M\in\ZZ_6$, while $L_i+M_i$, $\hat M+\sum_iM_i$ and $S_i+S_j$ are even, so that the
boundary state is invariant under $n\mapsto n+3$, $m_i\mapsto m_i+3, i=1,\ldots, 3$. Here,
$|[l_i,m_i+n,s_i]\rrangle^{\pi,g^n}_{{\rm B}}$ is the Ishibashi state in the $n^{\rm th}$ twisted sector
\be
\bigotimes_{i=1}^3
\Bigl( \left(\Hh_{[l_i,m_i+n,s_{2i-1}]}\otimes \bar\Hh_{[l_i,m_i-n,-s_{2i}]} \right) \otimes \left(
\Hh_{[l_i,-m_i+n,s_{2i}]}\otimes\bar\Hh_{[l_i,-m_i-n,-s_{2i-1}]} \right) \Bigr) \ ,
\ee
that is uniquely characterised (up to normalisation) by 
the B-type boundary conditions \eqref{Bpbound} (together with the corresponding fermionic formulae).
Note that
\be
(-1)^{F_L} |\!|L_i,M_i,\hat M,S_i\rrangle^{\pi}_{{\rm B},s}=
|\!|L_i,M_i,\hat M,S_i+1\rrangle^{\pi}_{{\rm B},s}\ .
\ee
For the case at hand, the overlap between two such branes with $L_i=0$ and $s=1$ (RR sector) 
is
\begin{align}
&{}^{\pi}_{{\rm B},s=1}\llangle 0,0,0,M_1',M_2',M_3',\hat M',0|\!|
q^{\frac{1}{2}(L_0+\bar L_0)-\frac{c}{24}}(-1)^{F_L}
|\!|0,0,0,M_1,M_2,M_3,\hat M,0\rrangle^{\pi}_{{\rm B},s=1}\notag\\
&=\sum_{ m_i=\pm 1}
e^{-\frac{\pi i }{2}\sum_i^6(1+ m_i)}
\delta^{(3)}\Bigl(\frac{\hat M'-\hat M+\sum_i m_i}{2}\Bigr)
\prod_{i=1}^3\delta^{(3)}\Bigl(\frac{M_i-M_i'+ m_{2i-1}- m_{2i}}{2}\Bigr) \ .
\end{align}
\medskip

In order to determine the relative intersection number between the A-type and the B-type D-branes,
we also need to understand the overlap between the different types of branes. 
For the $(1)^6$ model, the only non-vanishing overlaps are between Ishibashi states in the
$n=0$ sectors of the form
\be
\bigotimes_{i=1}^3 (\Hh_{[l_i,m_i,s_i]}\otimes
\bar\Hh_{[l_i,m_i,s_i]}\otimes\Hh_{[l_i,-m_i,-s_{i}]}\otimes\bar\Hh_{[l_i,-m_i,-s_i]})\ .
\ee
Taking into account the implicit relative phase between the Ishibashi states as in \cite{Brunner:2005fv},
one finds that the overlap between the relevant Ishibashi states is
\be {}_{\rm A}\llangle l_i,m_i,s_i| q^{\frac{1}{2}(L_0+\bar L_0)-\frac{c}{24}}
|[l_i,m_i+n,s_i]\rrangle^{\pi}_{\rm B}
=\prod_{i=1}^3e^{\frac{\pi i}{3}m_i-\frac{\pi i}{2}s_i}\chi_{[l_i,m_i,s_i]}(q^2)\ ,
\ee
and hence the overlap between the relevant boundary states is given by
\begin{align} {}_{{\rm A},s=1}&\llangle0,M_i',0|\!| q^{\frac{1}{2}(L_0+\bar L_0)-\frac{c}{24}}(-1)^{F_L}
|\!|0,0,0,M_1,M_2,M_3,\hat M,0\rrangle^{\pi}_{{\rm B},s=1}\notag\\
=&\sum_{m_1, m_2, m_3=\pm 1}-e^{-\frac{\pi i}{2}(1+\sum_i m_i)}\prod_i^3
\delta^{(3)}\Bigl(\frac{M_i-M_{2i-1}'+M_{2i}'+ m_i+1}{2}\Bigr)\ .
\end{align}

A similar computation can be done for the $(2)^4$ model. The B-type permutation brane for 
$\pi\in S_4$ with cycle decomposition $(12)(34)$ is
\begin{align}
|\!|L_1,&L_2,M_1,M_2,\hat M,S_i\rrangle^{\pi}_{{\rm B},s} \notag \\
&=\frac{1}{2^2}\frac{1}{\sqrt{4}}\sum_{n\in\ZZ_4}e^{-i \frac{\pi n}{4} \hat M}\sum_{l_1,l_2=0}^2\, 
\sum_{m_1,m_2=0}^7\, \sum_{\nu_1,\ldots,\nu_4=0}^1
\prod_{i}\Bigl(\frac{1+(-1)^{l_i+m_i+n+s}}{2}\Bigr) \label{24Bper} \\
& \quad\cdot \prod_i
\frac{S_{L_il_i}}{S_{0l_i}}e^{i \frac{\pi}{4} \sum_im_iM_i}(-1)^{\sum_i^4 S_i\nu_i}
e^{- i \frac{\pi s}{2} \sum_i^4S_i}|[l_i,m_i+n,s+2\nu_i]\rrangle^{\pi,g^n}_{{\rm B}}\ , \notag 
\end{align}
where $\hat M\in\ZZ_8$, and $L_i+M_i$, $\hat M+\sum_iM_i$ and $S_i+S_j$ are all even. The
left-moving  fermionic number operator acts now by
\be
(-1)^{F_L} |\!|L_1,L_2,M_1,M_2,\hat M,S_i\rrangle^{\pi}_{{\rm B},s}=
|\!|L_1,L_2,M_1,M_2,\hat M+4,S_i+1\rrangle^{\pi}_{{\rm B},s}\ ,
\ee
and the overlap between two branes with $L_1,L_2=0$ is
\begin{align}
&{}^{\pi}_{{\rm B},s=1}\llangle 0,0,M_1',M_2',\hat M',0|\!|
q^{\frac{1}{2}(L_0+\bar L_0)-\frac{c}{24}}(-1)^{F_L}
|\!|0,0,M_1,M_2,\hat M,0\rrangle^{\pi}_{{\rm B},s=1} \nonumber \\
&=\frac{1}{4} \sum_{(l_i, m_i,s_i)}\delta_{ l_1 l_2}\delta_{ l_3 l_4}\,
\prod_{i=1}^{4}\delta^{(2)}(s_i+1)\, 
e^{-\frac{\pi i }{2}\sum_i^4(1+s_i)}
\delta^{(4)}\Bigl(\frac{\hat M'-\hat M+4+\sum_i^4 m_i}{2}\Bigr) \nonumber\\
&\hspace{2.5cm}\cdot 
\prod_{i=1}^2\delta^{(4)}\Bigl(\frac{M_i-M_i'+m_{2i-1}- m_{2i}}{2}\Bigr)\,
\chi_{[l_i m_i s_i]}(\tilde{q}) \notag\\
&=\sum_{\substack{ m_1, m_2\in\{1,2,3\}\\ s_1, s_2=\pm 1}}e^{-\frac{\pi i }{2}(2+s_1+s_2)}\,
\delta^{(4)}\Bigl(\frac{\hat M'-\hat M+4+\sum_i^2 m_i( s_i+1)}{2}\Bigr) \nonumber\\
&\hspace{2.5cm}\cdot \prod_{i=1}^2\delta^{(4)}\Bigl(\frac{M_i-M_i'+ m_{i}(s_i-1)}{2}\Bigr)\ .
\end{align} 
Finally, the RR-overlap between
A-type and permutation B-type branes in this model is given by
\begin{align} {}_{{\rm A},s=1}&\llangle0,M_i',0|\!| q^{\frac{1}{2}(L_0+\bar L_0)-\frac{c}{24}}(-1)^{F_L}
|\!|0,0,0,M_1,M_2,\hat M,0\rrangle^{\pi}_{{\rm B},s=1}\notag\\
=&\sum_{m_1, m_2=\pm 1}e^{-\frac{\pi i}{2}(2+\sum_i m_i)}
 \delta^{(4)}\Bigl(\frac{M_1-M_{1}'+M_{2}'+ m_1+1}{2}\Bigr)\nonumber\\
&\hspace{2.5cm}
 \cdot\delta^{(4)}\Bigl(\frac{M_2-M_{3}'+M_{4}'+ m_2+1}{2}\Bigr)\ .
\end{align}

\end{document}